\let\@twosidetrue\@twosidefalse\let\@mparswitchtrue\@mparswitchfalse\makeatother 
\newcommand{\ifarxivelse}[2]{\iftoggle{arxiv}{#1}{\cite[#2]{arxivversion}}}
\def\orcidID#1{\textsuperscript{\,\smash{\protect\raisebox{-1.25pt}{\href{http://orcid.org/#1}{\protect\includegraphics[scale=.8]{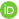}}}}}}
\Crefname{figure}{Fig.}{Figs.}
\crefname{figure}{fig.}{figs.}
\Crefname{tabular}{Tab.}{Tabs.}
\crefname{tabular}{tab.}{tabs.}
\Crefname{section}{Sec.}{Sects.}
\crefname{section}{sec.}{sects.}
\Crefname{appendix}{App.}{Apps.}
\crefname{appendix}{app.}{apps.}
\Crefname{equation}{Eq.}{Eqs.}
\crefname{equation}{eq.}{eqs.}
\Crefname{example}{Ex.}{Exs.}
\crefname{example}{ex.}{exs.}
\tikzset{circled node/.style={circle,draw, inner sep=0, minimum size=1.5em},
nodestyle/.style={circled node,text height=.8em,text depth=.25em}}
\newcommand{\NN}{\ensuremath{\mathbb{N}}\xspace}  
\newcommand{\RR}{\ensuremath{\mathbb{R}}\xspace}  
\newcommand{\PP}{\ensuremath{\mathcal{T}}\xspace}
\newcommand{\I}{\ensuremath{\mathscr{I}}\xspace}
\renewcommand{\O}{\ensuremath{\mathscr{O}}\xspace}
\newcommand{\U}{\ensuremath{\mathscr{U}}\xspace}
\newcommand{\V}{\ensuremath{\mathscr{V}}\xspace}
\newcommand{\1}{\ensuremath{\mathbf{1}}\xspace}
\newcommand{\tool}[1]{\textsc{#1}}
\newcommand{\lang}[1]{\textsc{#1}}
\newcommand{\toolset}{\tool{Modest Toolset}\xspace}
\newcommand{\eg}{e.g.\ }
\newcommand{\ie}{i.e.\ }
\newcommand{\etal}{et al.\xspace}
\newcommand{\wrt}{w.r.t.\xspace}
\newcommand{\DTMC}{\textup{DTMC}\xspace}
\newcommand{\SCC}{\textup{SCC}\xspace}
\newcommand{\set}[1]{\ensuremath{\{\,#1\,\}}}
\newcommand{\powerset}[1]{\ensuremath{2^{#1}}\xspace}
\newcommand{\Comp}[1]{\ensuremath{\mathrm{Comp}\left( #1\right)}\xspace}
\newcommand{\defeq}{\mathrel{\vbox{\offinterlineskip\ialign{\hfil##\hfil\cr{\tiny \rm def}\cr\noalign{\kern0.30ex}$=$\cr}}}}
\renewcommand{\P}{\ensuremath{\mathbf{P}}\xspace}
\newcommand{\M}{\ensuremath{\mathcal{M}}\xspace}
\g@addto@macro\normalsize{%
  \setlength\abovedisplayskip{3pt}%
  \setlength\belowdisplayskip{3pt}%
  \setlength\abovedisplayshortskip{-3pt}%
  \setlength\belowdisplayshortskip{3pt}%
}%
\begin{document}

\title{%
\iftoggle{arxiv}{DTMC Model Checking by Path\\ Abstraction Revisited (extended version)}
{DTMC Model Checking by\\ Path Abstraction Revisited}%
\thanks{
This work was supported
by the EU's Horizon 2020 research and innovation programme under MSCA grant agreement 101008233 (MISSION),
by the Interreg North Sea project STORM\_SAFE,
and
by NWO VIDI grant VI.Vidi.223.110 (TruSTy).
}
}
\iftoggle{arxiv}{\titlerunning{~}}{}

\author{%
Arnd Hartmanns\orcidID{0000-0003-3268-8674}
\and
Robert Modderman$^{\text{\,(\raisebox{-1.6pt}{\Envelope})}}$\orcidID{0009-0002-9198-3809}
}
\institute{%
University of Twente, Enschede, The Netherlands
$\cdot$ \email{r.modderman@utwente.nl}
}

\maketitle

\begin{abstract}
Computing the probability of reaching a set of goal states $G$ in a discrete-time Markov chain ({\DTMC}) is a core task of probabilistic model checking.
We can do so by directly computing the probability mass of the set of \emph{all} finite paths from the initial state to $G$; however, when refining counterexamples, it is also interesting to compute the probability mass of subsets of paths.
This can be achieved by splitting the computation into \emph{path abstractions} that calculate ``local'' reachability probabilities as shown by \'{A}brah\'{a}m \etal in 2010.
In this paper, we complete and extend their work:
We prove that splitting the computation into path abstractions indeed yields the same result as the direct approach, and that the splitting does not need to follow the \SCC structure.
In particular, we prove that path abstraction can be performed along \emph{any} finite sequence of sets of non-goal states.
Our proofs proceed in a novel way by interpreting the {\DTMC} as a structure on the free monoid on its state space, which makes them clean and concise.
Additionally, we provide a compact reference implementation of path abstraction in \lang{PARI/GP}.
\end{abstract}

\section{Introduction}

In this paper, we study methods to split computations of total reachability probabilities on discrete-time Markov chains ({\DTMC}s) into computing \emph{local} reachability probabilities:
Given a \DTMC with finite state space $S$ and initial state $a$, instead of computing the total probability $\P^a(\diamondsuit\,G)$ of reaching the set of goal states $G$ from $a$ (where all $s \in G$ are absorbing) by directly computing the probability mass of the set of paths from $a$ to $G$, we select a sequence of subsets of the set of non-goal states, compute local reachability probabilities \emph{over} those state subsets---\ie the probabilities of the paths within the subset that start from a transition entering the subset and end in a transition leaving the subset---and in the end compute the total probabilities from there.

Computing these local probabilities can be done by \emph{path abstraction}~\cite{AJWKB10}, which is an operation that, given a subset $S_1$ of the set of all non-goal states, moves the probability masses of the \emph{paths} that pass through $S_1$ onto new \emph{transitions} in a new \DTMC:
the path abstraction of the original \DTMC \emph{over} $S_1$.
Most states in $S_1$ become unreachable in the new \DTMC and could be removed.
The path abstraction operation can be used to split the computation of total reachability probabilities into multiple steps of computing local reachability probabilities on subsets of states (and thus of paths). 
It also provides a recipe for probabilistic \emph{counterexample refinement}~\cite{AAR09}: 
each separate computation yields a candidate for the violation of a certain specified reachability constraint, e.g. those of the form $\P^a(\diamondsuit\,s)\leqslant\lambda$ for some goal state $s\in G$ and some given $\lambda\in[0,1]$.

\'{A}brah\'{a}m \etal in~\cite{AJWKB10} give a procedure to split the computation of total reachability probabilities by path abstraction based on a recursive decomposition into \emph{strongly connected components} ({\SCC}s) of the DTMC's underlying digraph up to trivial {\SCC}s. 
Furthermore, they give a concrete algorithm for computing path abstractions.
However, in \cite{AJWKB10}, no proof is supplied for the statement that the direct approach and the one that proceeds by splitting the computation over {\SCC}s yield the same result.
Furthermore, the algorithm as given in~\cite[Sec.~III]{AJWKB10} can be used only for subsets $S_1\subseteq S$ for which there is a path from every state within $S_1$ to a state outside of $S_1$, \ie for $S_1$ that do not contain bottom-SCCs (BSCCs), i.e., SCCs with no transitions leading outwards.
This can be ensured by additional preprocessing of the DTMC (\eg finding and collapsing all states that reach $s$ with probability zero~\cite{FKNP11}), but limits the algorithm's generality.

\paragraph{Our contribution.}
In \Cref{sec:path-abstr}, we prove that path abstraction is ``monotonically absorbing'' and thus both the direct ``global'' approach (of path abstracting a \DTMC over the set of all non-goal states straight away) and the ``local'' approach (of path abstracting along \emph{any} finite sequence of subsets of non-goal states and \emph{then} over the set of all non-goal states) yield the exact same result.
We achieve this in a novel, elegant way by interpreting a \DTMC as a structure on the free monoid on its state space (in \Cref{sec:Background}).
From our proof follows correctness of the approach of~\cite{AJWKB10}, which we show in \Cref{sec:DTMC-model-checking}, along with coupling our findings to counterexample refinement.
Furthermore, in \Cref{sec:num-methods}, we give a numerical algorithm to compute the path abstraction of a \DTMC over \emph{any} subset of the state space, taking into account that, when abstracting over a subset $S_1\subseteq S$, $S_1$ may contain states that do not reach outside of $S_1$.
Finally, in \Cref{sec:ref_impl}, we give a high-level reference implementation in the computer algebra system \lang{PARI/GP}~\cite{PARI2} to compute path abstractions, which in addition allows for the input of parametric {\DTMC}s.
This reference implementation is close to our abstract formulation of the algorithm, providing some confidence in its correctness.
It is a first step towards formalizing and machine-checking our algorithm and proofs using an interactive theorem prover, like recently done for the iterative interval iteration algorithm that computes reachability probabilities in a global manner~\cite{KSAHL25}.

\paragraph{Related approaches.}
Computing reachability probabilities in DTMCs, mathematically, means solving a linear equation system~\cite[Sec.~10.1.1]{BK08}; research lies in doing so efficiently under different constraints and with a view towards different purposes.
Our work extends and completes that of \'{A}brah\'{a}m \etal~\cite{AJWKB10}, which uses \emph{path abstraction} over the SCC structure, with a view towards counterexample refinement.
We briefly compare path abstraction to \emph{state elimination}, which is prominently used for checking parametric DTMCs, in \Cref{sec:Conclusion}.
The idea of exploiting the SCC structure was already part of the model reduction techniques for Markov decision processes (MDPs), of which DTMCs are a special case, proposed by Ciesinski \etal~\cite{CBGK08} in 2008.
It also underpins the iterative-numeric \emph{topological value iteration} algorithm~\cite{DG07} as well as the incremental approach by Kwiatkowska \etal~\cite{KPQ11}, and improves parametric DTMC model checking~\cite{JCVWAKB14}.
Gui \etal~\cite{SGSLD13} then use the DTMC's SCC structure for computing reachability probabilities via Gaussian elimination, with a view towards improved scalability and performance by eliminating cycles.
Their work was later extended to MDPs~\cite{GSSLD14}.
The idea of reducing to acyclic DTMCs was in fact used earlier by Andr{\'{e}}s \etal~\cite{ADR08} for finding and describing probabilistic counterexamples.

\section{Background}
\label{sec:Background}

$\NN = \set{0,1,\ldots}$ is the set of natural numbers, and $\NN^+ \defeq \NN \setminus \set{0}$.
Given $n\in\NN^+$, let $[n]\defeq\set{1,\ldots,n}$. 
Given a set $X$, we write $\powerset{X}$ for the powerset of~$X$.
Given a propositional formula $\varphi$ and objects $A$ and $B$, let $[\varphi,A,B]$ be the object $A$ if $\varphi$ holds and $B$ otherwise.
In addition, let $[\varphi] \defeq [\varphi,1,0]$ (the \emph{Iverson bracket} of~$\varphi$).

We index matrices by finite sets, which is more general than indexing them by positive integers.
For a matrix $T$ indexed by $A \times B$, we let $T(a, b)$ be the entry of $T$ indexed by the pair $(a, b) \in A \times B$.
If $T$ is an $A \times B$ matrix, and $A'\subseteq A$ and $B'\subseteq B$, then let $T(A',B')$ denote the submatrix of $T$ whose rows and columns are those indexed by $A'$ and $B'$, respectively.
If $A'=B'$ then write $T(A')$ for $T(A',A')$.
Let $\1(A,B)$ denote the matrix with $\1(a,b)=[a=b]$.
If $A$ and $B$ are clear from the context, then we write $\1=\1(A,B)$.

\subsection{Combinatorics on Words: the Free Monoid}
\label{subsec:Combinatorics-on-Words}

Let us formalize and introduce some notions on free monoids on finite sets (or, equivalently, combinatorics on words), which is a field within mathematics that studies words and formal languages.
We borrow most of the terminology from~\cite{L02}.

Let $\Sigma$ be a finite set, the \emph{alphabet}, and let $\Sigma^*$ be the set of all finite sequences over $\Sigma$, the \emph{words} over $\Sigma$.
Given $x\in\Sigma^*$, let $|x|\in\NN$ be the \emph{length} of $x$; if in addition $x$ is non-empty, then let $x_i$ denote the $i$-th entry of $x$ for all $i\in[|x|]$.
Entries of words are called \emph{letters}.
The \emph{empty word} is denoted by~$\varepsilon$.
Given $n\in\NN$, let $\Sigma^{\Join n} \defeq \set{x\in\Sigma^*\mid|x|\Join n}$ where ${\Join}\in\set{<,\leqslant,\geqslant,>}$, and let $\Sigma^n \defeq \set{x\in\Sigma^*\mid|x|=n}$.
Whenever appropriate, we identify $\Sigma^1$ with $\Sigma$.
Furthermore, conforming to~\cite{L02}, we write $\Sigma^+=\Sigma^*\setminus\set{\varepsilon}$.

Given $x,y\in\Sigma^*$, let $xy$ denote the \emph{concatenation} of $x$ and $y$.
$\Sigma^*$ together with the concatenation operation and the empty sequence $\varepsilon$ forms a \emph{monoid}, as outlined in~\cite[Sec.~1.2.1]{L02}, known as the \emph{free monoid} on~$\Sigma$.
Concatenation extends to sets of words:
Given $X,Y\subseteq\Sigma^*$, we can form $XY \defeq \set{xy\mid x\in X\land y\in Y}$.
Furthermore, for $x,y\in\Sigma^*$ and $a\in\Sigma$, set $(xa)\star(ay) \defeq xay$.\footnote{In this fashion, we obtain a function $\star\colon\biguplus_{a\in\Sigma}\Sigma^*a\times a\Sigma^*\to\Sigma^+$.}
If $X,Y\subseteq\Sigma^*$ are such that there exists an $a\in\Sigma$ with $X\subseteq\Sigma^*a$ and $Y\subseteq a\Sigma^*$, then we may form $X\star Y \defeq \set{x\star y\mid x\in X\land y\in Y}$.

If $x\in\Sigma^*$, then $x'\in\Sigma^*$ is called a \emph{factor} of $x$ if there exist $y,z\in\Sigma^*$ such that $x=yx'z$, denoted by $x'\sqsubseteq x$.
If $y$ can be chosen empty, then $x'$ is called a \emph{prefix} of $x$, denoted by $x'\leqslant x$.
Note that $(\Sigma^*,\leqslant)$ in this fashion becomes a partially ordered set.
If $x'\leqslant x$ but $x'\neq x$, then this is denoted by $x'<x$.
If $z$ can be chosen empty, then $x'$ is called a \emph{suffix} of $x$.
Given $L\subseteq\Sigma^*$, we let $L^{\leqslant} \defeq \set{x\in L\mid x'<x\Rightarrow x'\notin L}$ denote the subset of $L$ of minimal elements \wrt $\leqslant$.\footnote{In~\cite{AJWKB10}, the prime symbol $'$ is used for this operation; 
we use $^\leqslant$ instead for readability.}
Note that the operator $(\cdot)^{\leqslant}$ on $\powerset{\Sigma^*}$ is idempotent, i.e., $(L^{\leqslant})^{\leqslant}=L^\leqslant$.

Now, given $\Sigma_1\subseteq\Sigma$,
we define the operation $-\Sigma_1\colon\Sigma^*\to\Sigma^*$ on $\Sigma^*$ as follows, where for ease of notation we write $x-\Sigma_1$ instead of $(-\Sigma_1)(x)$:
Given a word $x\in\Sigma^*$, we consider $x$ as the unique minimal-length\footnote{I.e., the letters of $x$ as a word over $\Sigma_1^+\uplus(\Sigma\backslash\Sigma_1)^+$ alternate over $\Sigma_1$ and $\Sigma\setminus\Sigma_1$.} word over the alphabet $\Sigma_1^+\uplus(\Sigma\setminus\Sigma_1)^+$ and let $x-\Sigma_1$ denote the word obtained from $x$ by replacing each letter $a\in\Sigma_1^+$ of $x$ by the first letter $a_1$ of $a$ when $a$ is considered as a word over $\Sigma$.
In other words, from every maximal factor $a\sqsubseteq x$ of $x$ with $a\in \Sigma_1^+$, we only keep its first entry.
To expand this, if $\Sigma_1,\ldots,\Sigma_n\subseteq\Sigma$, then we write $-(\Sigma_1,\ldots,\Sigma_n)$ for the operation $(-\Sigma_n)\circ\cdots\circ(-\Sigma_1)\colon\Sigma^*\to\Sigma^*$.
(Note that function composition order is read right-to-left.)
Furthermore, write, for $x\in\Sigma^*$, $x-(\Sigma_1,\ldots,\Sigma_n)$ instead of $(-(\Sigma_1,\ldots,\Sigma_n))(x)$.
Given $x\in\Sigma^*$, write $x+(\Sigma_1,\ldots,\Sigma_n)$ for the pre-image $(-(\Sigma_1,\ldots,\Sigma_n))^{-1}(x)=\set{x'\in\Sigma^*\mid x'-(\Sigma_1,\ldots,\Sigma_n)=x}$.
For $x+(\Sigma_1)$ we simply write $x+\Sigma_1$---thus, $x+\Sigma_1=\set{x'\in\Sigma^*\mid x'-\Sigma_1=x}$.

\begin{example}
Consider sequences over the Latin alphabet $\Sigma=\set{a,b,\ldots,z}$. We illustrate the $-\Sigma_1$-operation by a few cases of $x\in\Sigma^*$ and $\Sigma_1=\set{b,r,e,a,k}$.
\begin{enumerate}
    \item Let $x=error$.
    Then, $x$ as the unique minimal-length word over the alphabet $\Sigma_1^+\uplus(\Sigma\backslash\Sigma_1)^+$ is written as $x=(err)(o)(r)$.
    To obtain $x-\Sigma_1$, we replace the occurrence of $err$ by $e$ and the occurrence of $r$ by $r$, so $x-\Sigma_1=eor$.
    \item Since $x-\Sigma_1=eor$ we have $x\in eor+\Sigma_1$.
    \item For $x=spacebar$ we have $x-\Sigma_1=space$, since $x$ considered as the minimal-length word over $\Sigma_1^+\uplus(\Sigma\backslash\Sigma_1)^+$ is written as $x=(sp)(a)(c)(ebar)$.
    \item For $x=coffee$ we have $x+\Sigma_1=\varnothing$, as no $y\in\Sigma^*$ satisfies $y-\Sigma_1=x$ because $ee\sqsubseteq x$ is a factor of $x$ containing $e$ twice while $e\in\Sigma_1$.
\end{enumerate}
\end{example}

\subsection{Discrete-Time Markov Chains}
\label{subsec:DTMC}

Let us give a first formal definition of \emph{discrete-time Markov chains} ({\DTMC}s).
\begin{definition}
Let $S$ be a finite set and $a\in S$.
A \textbf{\emph{discrete-time Markov chain}} (DTMC) with \emph{state space} $S$ and \emph{initial state} $a\in S$ is a triple $(S,a,T)$ where $T\in\RR^{S\times S}$ is a \emph{stochastic (substochastic) matrix}, \ie $T(s,t)\geqslant0$ for all $s,t\in S$, and for all $s\in S$, $\sum_{t\in S}T(s,t)=1$ ($\leqslant1$).
\end{definition}
Since for the mechanism of path abstraction we need to reason about probabilities of paths and sets thereof, a convenient alternative interpretation of a \DTMC is as a structure on $S^+$. 
The key observation is that a (sub)stochastic matrix $T\in\RR^{S\times S}$ corresponds to a unique function $\P\colon S^+\to[0,1]$ such that $\P(x)=1$ for all $x\in S^1$, $\P(st)=T(s,t)$ for all $s,t\in S$, and $\P(xsy)=\P(xs)\P(sy)$ for all $x,y\in S^*$ and $s\in S$.

\begin{definition}
\label{def:PMF}
Let $S$ be a finite set.
Then, let $\PP(S)$ denote the set of all functions $\P\colon S^+\to[0,1]$ satisfying
\begin{enumerate}
\item $\P(x)=1$ for all $x\in S^1$;
\item for all $s\in S$, $\sum_{t\in S}\P(st)\leqslant1$;
\item for all $x,y\in S^*$ and $s\in S$,
$\P(xsy)=\P(xs)\P(sy)$---or, equivalently, for all $s\in S$ and $u\in S^*s$ and $v\in sS^*$, $\P(u\star v)=\P(u)\P(v)$.
\end{enumerate}
Let $\PP_1(S) \defeq \set{\P\in\PP(S) \mid \forall s\in S\colon \sum_{t\in S}\P(st)=1}$.
Functions $\P\in\PP(S)$ are called \emph{substochastic transition probability functions} on $S^+$, and functions $\P\in\PP_1(S)$ are simply called \textbf{\emph{transition probability function}}\emph{s} on $S^+$.
\end{definition}
Note that any $\P\in\PP(S)$ can be extended to a function $\powerset{S^+}\to[0,\infty]$ by setting, for $R\subseteq S^+$, $\P(R)\defeq\sum_{x\in R^\leqslant}\P(x)$, which is guaranteed by~\cite[Theorem 0.0.2]{T11} to exist as either a nonnegative real number\footnote{By default, to the empty sum we assign the value $0$, so we obtain $\P(\varnothing)=0$.} or $+\infty$.
Note that we do not sum over $R$ but over $R^\leqslant$ instead. 
This way of defining probability masses of sets of paths is standard in the realm of probabilistic model checking on {\DTMC}s, see e.g.~\cite[Sec.~II]{AJWKB10}.
We can now give the following alternative definition of DTMCs, which we call the ``free monoid interpretation'':

\begin{definition}
\label{def:Markov}
Let $S$ be a finite set and let $a\in S$.
Then, a \textbf{\emph{discrete-time Markov chain}} (DTMC) with \emph{state space} $S$ and \emph{initial state} $a$ is a pair\footnote{The state space $S$ is implicitly given by the domain of $\P$ hence omitted.} $M=(\P,a)$ where $\P\in\PP_1(S)$.
Pairs $M=(\P,a)$ with $\P\in\PP(S)$ are called \emph{substochastic discrete-time Markov chains}.
We write $\M_1(S,a)$ and $\M(S,a)$ for the two classes of Markov chains, respectively.
\end{definition}
From now on, let $M=(\P,a)\in\M(S,a)$ always be given implicitly.
We are ultimately interested in computing \textbf{reachability probabilities}, \ie the probability $\P^a(\diamondsuit\,G)$ of reaching some set of goal states $G \subseteq S \setminus \set{a}$.
We write $\P^a(\diamondsuit\,s)$ for $\P^a(\diamondsuit\,\set{s})$.
W.l.o.g.\ we assume that each goal state is absorbing and $a$ is not.

\begin{definition}
\label{def:reaching-G}
Let $G \subseteq \set{ s \in S \mid \P(ss) = 1 }$ be a set of absorbing states with $a \notin G$.
Then the probability of reaching $G$ from $a$ is
$\P^a(\diamondsuit\,G) \defeq \P(aS^*G)$.
\end{definition}

\begin{example}
\Cref{subfig:DTMC_example} visualizes \DTMC $M_e \in \M_1(S_e, s_1)$ with $S_e=\set{s_1,\ldots,s_8}$. 
We have $\P(s_2) = 1$, $\P(s_2\,s_2) = 0$, $\P(s_2\,s_5) = \frac{1}{3}$, and $\P(s_2\,(s_5\,s_6)^+\,s_2) = \frac{1}{6}$.
Some reachability probabilities are $\P^{s_1}(\diamondsuit\,\set{s_7, s_8}) = 1$ and $\P^{s_1}(\diamondsuit\,s_7) = \frac{5}{9}$ (see~\Cref{ex:first-method}).
\end{example}

\begin{definition}
\label{def:Markov-connectivity}
A \emph{transition} is a pair $(s,t)\in S\times S$ such that $\P(st)>0$.
The \emph{underlying digraph} of $M$ is the graph with vertex set $S$ whose edges are the transitions in~$M$.
A \emph{finite path} in $M$ is a 
sequence $x \in S^+$ such that $\P(x)>0$.
State $t$ is \emph{reachable} from state $s$ \emph{in} $M$ if $\P(sS^*t)>0$.
Given $K\subseteq S$, let $\Comp{M,K}$ denote the coarsest partition of $K$ such that each $C\in\Comp{M,K}$ satisfies $\P(sC^*t)>0$ for all distinct $s,t\in C$.
That is, $\Comp{M,K}$ are the \emph{strongly connected components} (SCCs) of the underlying digraph of $M$ restricted to $K$.
\end{definition}

\section{Path Abstraction via the Free Monoid}
\label{sec:path-abstr}

We define path abstraction using our new free monoid interpretation of {\DTMC}s, and in contrast to \cite{AJWKB10}, without altering the state space (we simply do not draw states that become isolated).
We then prove our main results on the monotonic absorption property of path abstraction that equates the global and local approaches to DTMC model checking.
For that, we need two \mbox{fundamental results}:

\begin{lemma}
\label{lem:sequence-abstr-is-monotonic-wrt-prefix-order}
Let $\Sigma$ be a finite set, and let $\Sigma_1\subseteq\Sigma$.
Then, for all $x\in\Sigma^*$, we have
$x'\leqslant x\Rightarrow x'-\Sigma_1\leqslant x-\Sigma_1$.
\end{lemma}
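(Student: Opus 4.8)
The plan is to reduce the monotonicity of $-\Sigma_1$ to a purely local observation about how the operation acts letter-by-letter. First I would recast the definition of $x-\Sigma_1$ in an equivalent ``causal'' form: scanning $x$ from left to right, the letter at position $i$ is \emph{deleted} precisely when $x_i\in\Sigma_1$ and $i>1$ with $x_{i-1}\in\Sigma_1$, and is \emph{retained} otherwise; then $x-\Sigma_1$ is the subword of the retained letters, in order. I would check that this matches the block-based definition by noting that the first letter of each maximal factor $a\sqsubseteq x$ with $a\in\Sigma_1^+$ is exactly a position in $\Sigma_1$ whose predecessor (if any) lies outside $\Sigma_1$---so it is retained---while every later letter of such a factor has a $\Sigma_1$-predecessor and is deleted, and the letters of the $(\Sigma\setminus\Sigma_1)^+$-factors are never in $\Sigma_1$ and hence always retained.

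The crucial point is that this deletion rule at position $i$ depends only on $x_i$ and $x_{i-1}$. Since $x'\leqslant x$ means $x'_i=x_i$ for all $i\in[|x'|]$, the scans of $x'$ and of $x$ make identical keep/delete decisions on positions $1,\ldots,|x'|$. Hence the ordered sequence of letters retained from the first $|x'|$ positions of $x$ is exactly $x'-\Sigma_1$, and this sequence is an initial segment of the full retained sequence $x-\Sigma_1$; that is precisely $x'-\Sigma_1\leqslant x-\Sigma_1$.

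To make the ``initial segment'' step fully rigorous without manipulating subsequences by hand, I would cast it as an induction: it suffices to prove the one-step statement $y-\Sigma_1\leqslant (ya)-\Sigma_1$ for every $y\in\Sigma^*$ and $a\in\Sigma$, and then obtain the general case by writing $x=x'a_1\cdots a_k$ and chaining the one-step inequality using transitivity of $\leqslant$. The one-step statement splits into three cases: if $a\notin\Sigma_1$, or if $a\in\Sigma_1$ but $y$ is empty or ends outside $\Sigma_1$, then $a$ begins a fresh retained block and $(ya)-\Sigma_1=(y-\Sigma_1)\,a$; if $a\in\Sigma_1$ and $y$ ends in $\Sigma_1$, then $a$ merely extends the trailing $\Sigma_1$-factor and $(ya)-\Sigma_1=y-\Sigma_1$. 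Either outcome yields $y-\Sigma_1\leqslant(ya)-\Sigma_1$.

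The main obstacle I anticipate is bookkeeping at the boundaries rather than any genuine difficulty: establishing the equivalence of the causal characterization with the ``maximal factor'' definition, and correctly handling the first position and the empty-word case, where there is no predecessor to inspect. Once the local characterization is in place, monotonicity is immediate, since prefixes share their initial letters and the operation never ``looks ahead''.
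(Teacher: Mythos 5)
Your proposal is correct and, once you strip away the motivating ``causal scan'' discussion, it is essentially the paper's own proof: the paper likewise reduces to the one-step case $x=x'a$ (justified by transitivity of the prefix order and induction) and then performs exactly your three-way case split---$a\notin\Sigma_1$, or $a\in\Sigma_1$ with $x'\notin\Sigma^*\Sigma_1$, giving $(x'a)-\Sigma_1=(x'-\Sigma_1)a$, versus $a\in\Sigma_1$ with $x'\in\Sigma^*\Sigma_1$, giving $(x'a)-\Sigma_1=x'-\Sigma_1$---written compactly with the bracket notation $[\varphi,A,B]$. The letter-by-letter keep/delete characterization you set up is sound but is not needed as a separate step, since the one-step induction already carries the whole argument.
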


\begin{proof}[sketch]
By induction on $|x|-|x'|$.
The full proof is included in \ifarxivelse{\Cref{app:Proofs2}}{App.~A}.
\end{proof}

\begin{theorem}
\label{thm:words-combinatorics-main-results}
Let $\Sigma$ be a finite set.
Then, the following hold:
\begin{enumerate}
\item
Let $\Sigma_1\subseteq\Sigma$, let $x,y\in\Sigma^*$, and let $a\in\Sigma$.
Suppose that $xa\notin\Sigma^*\Sigma_1^2$.
Then, we have $xay-\Sigma_1=(xa-\Sigma_1)\star(ay-\Sigma_1)$.
\item
Again let $\Sigma_1\subseteq\Sigma$, let $x,y\in\Sigma^*$, and let $a\in\Sigma$.
Then, we have $(xa+\Sigma_1)^\leqslant\star(ay+\Sigma_1)^\leqslant=(xay+\Sigma_1)^\leqslant$.
\item
If $\Sigma_1\subseteq\Sigma_2\subseteq\Sigma$, then for all $x\in\Sigma^*$ we have $x-(\Sigma_1,\Sigma_2)=x-\Sigma_2$.
\item
Let $\Sigma_1\subseteq\Sigma_2\subseteq\Sigma$.
Then, for all $x\in\Sigma^*$ we have
$(x+\Sigma_2)^\leqslant=\biguplus\set{(y+\Sigma_1)^\leqslant\mid y\in(x+\Sigma_2)^\leqslant\land y-\Sigma_1=y}$.
\end{enumerate}
\end{theorem}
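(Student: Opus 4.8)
The plan is to prove the asserted set equality by establishing three things: that the union on the right is genuinely disjoint, that each of its pieces is contained in $(x+\Sigma_2)^\leqslant$, and that together they exhaust $(x+\Sigma_2)^\leqslant$. The two workhorses throughout are item~(3) of this theorem, which (since $\Sigma_1\subseteq\Sigma_2$) gives $(w-\Sigma_1)-\Sigma_2=w-\Sigma_2$ for every word $w$, and the idempotence of $-\Sigma_1$, together with the monotonicity \Cref{lem:sequence-abstr-is-monotonic-wrt-prefix-order}. The conceptual picture is that the map $\phi\colon w\mapsto w-\Sigma_1$ fibres $x+\Sigma_2$ over the set of words fixed by $-\Sigma_1$ that map to $x$ under $-\Sigma_2$, and the claim says that passing to prefix-minimal elements is compatible with this fibration once the base is cut down to the indices $y\in(x+\Sigma_2)^\leqslant$. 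Write $Y\defeq\set{y\in(x+\Sigma_2)^\leqslant \mid y-\Sigma_1=y}$ for this index set.

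Disjointness is immediate: if $w$ lay in both $(y+\Sigma_1)^\leqslant$ and $(y'+\Sigma_1)^\leqslant$, then $y=w-\Sigma_1=y'$, so distinct indices yield disjoint sets. For the inclusion $\supseteq$, I fix $y\in Y$ and $u\in(y+\Sigma_1)^\leqslant$. Item~(3) gives $u-\Sigma_2=(u-\Sigma_1)-\Sigma_2=y-\Sigma_2=x$, so $u\in x+\Sigma_2$, and it remains to verify prefix-minimality. Given a hypothetical proper prefix $u'<u$ with $u'-\Sigma_2=x$, I set $y'\defeq u'-\Sigma_1$; \Cref{lem:sequence-abstr-is-monotonic-wrt-prefix-order} yields $y'\leqslant y$ while item~(3) gives $y'-\Sigma_2=x$. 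If $y'=y$ this contradicts the minimality of $u$ in $y+\Sigma_1$; if $y'<y$ it contradicts the minimality of $y$ in $x+\Sigma_2$. Either way $u\in(x+\Sigma_2)^\leqslant$.

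For the inclusion $\subseteq$, I take $z\in(x+\Sigma_2)^\leqslant$ and put $y\defeq z-\Sigma_1$. Idempotence gives $y-\Sigma_1=y$ and item~(3) gives $y-\Sigma_2=x$, so $y\in x+\Sigma_2$; moreover $z$ is minimal in $y+\Sigma_1$, since any strictly shorter prefix mapping to $y$ would, via item~(3), be a strictly shorter prefix mapping to $x$, contradicting $z\in(x+\Sigma_2)^\leqslant$. Hence $z\in(y+\Sigma_1)^\leqslant$ provided $y\in Y$. The one genuinely delicate point, and the step I expect to be the main obstacle, is exactly showing $y\in(x+\Sigma_2)^\leqslant$, i.e.\ that $y$ is prefix-minimal in $x+\Sigma_2$. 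Monotonicity alone only tells me that prefixes of $z$ map under $\phi$ \emph{into} prefixes of $y$; to rule out a strictly shorter prefix $y''<y$ with $y''-\Sigma_2=x$ I must \emph{lift} $y''$ back to a prefix of $z$. The key fact I will isolate and prove separately is a ``staircase'' refinement of the monotonicity lemma: as $z'$ ranges over the prefixes of $z$ in increasing order, appending one letter changes $z'-\Sigma_1$ by at most one letter (it grows when the new letter leaves $\Sigma_1$ or starts a fresh $\Sigma_1$-run, and is unchanged when it continues one), so $\set{z'-\Sigma_1\mid z'\leqslant z}$ is precisely the set of \emph{all} prefixes of $y$. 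This surjectivity produces a prefix $z''\leqslant z$ with $z''-\Sigma_1=y''$; since $y''\neq y$ forces $z''\neq z$ and item~(3) gives $z''-\Sigma_2=x$, I obtain a contradiction to the minimality of $z$. Once this lifting step is secured, combining the three facts completes the proof.
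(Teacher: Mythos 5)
Your proposal addresses only part~4 of the theorem, and that is the one genuine gap: the statement comprises four claims, and your argument explicitly uses part~3 (and the idempotence of $-\Sigma_1$, which is part~3 with $\Sigma_2=\Sigma_1$) as a given tool rather than proving it. Parts 1--3 are not throwaway observations; in the paper each gets its own argument (part~1 by a case distinction on $a\in\Sigma_1$ versus $a\notin\Sigma_1$, part~2 by a two-way inclusion built on part~1, part~3 by viewing $x$ as a minimal-length word over $\Sigma_2^+\uplus(\Sigma\setminus\Sigma_2)^+$ and checking the claim on each maximal $\Sigma_2$-factor). Since part~4 leans on part~3 at every step, omitting these leaves the proof of the full statement incomplete.

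Within part~4, however, your argument is correct and follows essentially the same route as the paper's: disjointness because any element of $(y+\Sigma_1)^\leqslant\cap(y'+\Sigma_1)^\leqslant$ determines $y=y'$ as its $-\Sigma_1$-image; each fibre $(y+\Sigma_1)^\leqslant$ lands inside $(x+\Sigma_2)^\leqslant$ by combining part~3 with \Cref{lem:sequence-abstr-is-monotonic-wrt-prefix-order}; and the reverse inclusion sends $z\in(x+\Sigma_2)^\leqslant$ to the index $y=z-\Sigma_1$. Two local differences are worth noting, both in your favour. In the direction $\supseteq$, your case split ($y'=y$ contradicts minimality of $u$ in $y+\Sigma_1$; $y'<y$ contradicts minimality of $y$ in $x+\Sigma_2$) is cleaner than the paper's detour through showing $z\in\Sigma^*\Sigma_1^2$. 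More substantially, you correctly isolate the delicate step in the direction $\subseteq$: ruling out a strict prefix $y''<y$ with $y''-\Sigma_2=x$ requires \emph{lifting} $y''$ back to a strict prefix of $z$. The paper asserts exactly this lifting without justification (``if $w<y-\Sigma_1$ satisfies $w-\Sigma_2=x$, then there exists $y'<y$ with $w=y'-\Sigma_1$''), whereas your staircase refinement---appending one letter to a prefix of $z$ extends its $-\Sigma_1$-image by at most one letter, so the images of the prefixes of $z$ are precisely \emph{all} prefixes of $y$---supplies the missing argument and is sound (a discrete intermediate-value argument, with \Cref{lem:sequence-abstr-is-monotonic-wrt-prefix-order} guaranteeing each attained image is a prefix of $y$). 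In short: supply proofs of parts 1--3 and you have a complete proof that is, at one point, more rigorous than the paper's own.
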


\begin{proof}[sketch]
Part 1: by case distinction $a\in\Sigma_1$ vs.\ $a\notin\Sigma_1$.
Part 2: by part 1.
Part 3: by considering $x$ as a word of minimal length over the alphabet $\Sigma_2^+\uplus(\Sigma\setminus\Sigma_2)^+$.
Part 4: by part 3 and Lemma \ref{lem:sequence-abstr-is-monotonic-wrt-prefix-order}.
The full proofs are included in \ifarxivelse{\Cref{app:Proofs2}}{App.~A}.
\end{proof}
We also need some initial results on transition probability functions.
We note that parts 2 and 3 below may appear trivial, but are later needed.

\begin{theorem}
\label{thm:PMF-main-results}
Let $S$ be a finite set, and let $\P\in\PP(S)$. Then,
\begin{enumerate}
\item for all $R\subseteq S^+$, we have $\P(R)=\lim_{k\to\infty}\P(R^\leqslant\cap S^{\leqslant k})$;
\item if $R\subset sS^*$ for some $s\in S$ with $R$ finite, then $\P(R)\leqslant1$;
\item if $R\subseteq sS^*$ for some $s\in S$ with $R$ of \emph{any} cardinality, then still $\P(R)\leqslant1$;
\item if $R\subseteq T\subseteq S^+$, then $\P(R)\leqslant\P(T)$.
\end{enumerate}
\end{theorem}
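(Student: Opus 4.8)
The plan is to prove the four parts in order, each leaning on its predecessors, with the monotonicity in Part~4 carrying the real weight. The one structural fact underlying everything is that $R^\leqslant$ is a prefix \emph{antichain}: it contains no two distinct words one of which is a prefix of the other. This is immediate from the definition of $(\cdot)^\leqslant$, since if $x<y$ with both in $R^\leqslant$, then $x\in R$ would witness $y\notin R^\leqslant$. Consequently every subset of $R^\leqslant$ is again an antichain, so in particular $(R^\leqslant\cap S^{\leqslant k})^\leqslant=R^\leqslant\cap S^{\leqslant k}$ and thus $\P(R^\leqslant\cap S^{\leqslant k})=\sum_{x\in R^\leqslant,\,|x|\leqslant k}\P(x)$. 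For Part~1 I would then observe that the right-hand side is the partial sum of the nonnegative family $(\P(x))_{x\in R^\leqslant}$ restricted to words of length at most $k$; since every word has finite length, these partial sums increase to the full (possibly infinite) sum $\sum_{x\in R^\leqslant}\P(x)=\P(R)$ as $k\to\infty$, which is exactly the claim.

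For Part~2 I would induct on $N\defeq\max_{x\in R^\leqslant}|x|$. If $N\leqslant1$ then $R^\leqslant\subseteq\set{s}$ and $\P(R)\leqslant\P(s)=1$. For the step, note that $s\in R^\leqslant$ forces $R^\leqslant=\set{s}$ by the antichain property, so we may assume every $x\in R^\leqslant$ has length $\geqslant2$ and write $x=s\,x_2\,v$. The multiplicative law of \Cref{def:PMF} gives $\P(x)=\P(s\,x_2)\,\P(x_2 v)$, where $x_2 v$ is the suffix of $x$ from its second letter. Grouping the words of $R^\leqslant$ by their second letter $t$ and letting $Q_t$ collect the corresponding suffixes, each $Q_t$ is a finite antichain inside $tS^*$ of maximal length $<N$, so $\P(Q_t)\leqslant1$ by the induction hypothesis. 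Hence
\[
\P(R)=\sum_{t\in S}\P(st)\,\P(Q_t)\leqslant\sum_{t\in S}\P(st)\leqslant1,
\]
the last step being substochasticity. Part~3 then follows by combining Parts~1 and~2: for every $k$ the set $R^\leqslant\cap S^{\leqslant k}$ is a \emph{finite} subset of $sS^*$ (finite because $S$, hence $S^{\leqslant k}$, is finite), so $\P(R^\leqslant\cap S^{\leqslant k})\leqslant1$ by Part~2; letting $k\to\infty$ and invoking Part~1 yields $\P(R)\leqslant1$.

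Part~4 is where the work lies. The key device is a map $\phi\colon R^\leqslant\to T^\leqslant$ sending each $x\in R^\leqslant\subseteq R\subseteq T$ to its shortest prefix lying in $T$. This $\phi(x)$ is well defined and lands in $T^\leqslant$---its proper prefixes cannot be in $T$ by minimality---and it is the \emph{unique} prefix of $x$ in $T^\leqslant$, since $T^\leqslant$ is an antichain. Partitioning $R^\leqslant$ into the fibers $F_y\defeq\phi^{-1}(y)=\set{x\in R^\leqslant\mid y\leqslant x}$ indexed by $y\in T^\leqslant$, it suffices to prove the per-fiber bound $\sum_{x\in F_y}\P(x)\leqslant\P(y)$. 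Writing $y=y_1\cdots y_m$, every $x\in F_y$ has the form $x=y\,v$; applying the multiplicative law at the last letter $y_m$ factors $\P(x)=\P(y)\,\P(w_x)$, where $w_x\defeq y_m v\in y_m S^*$ is the suffix of $x$ starting at $y_m$. Stripping the common prefix $y_1\cdots y_{m-1}$ is order-preserving, so the residual words $\set{w_x\mid x\in F_y}$ form an antichain inside $y_m S^*$, and Part~3 bounds their total probability by $1$. Hence $\sum_{x\in F_y}\P(x)=\P(y)\sum_{x\in F_y}\P(w_x)\leqslant\P(y)$, and summing over $y\in T^\leqslant$ gives $\P(R)\leqslant\P(T)$.

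The main obstacle I anticipate is Part~4: one must set up $\phi$ correctly, i.e. establish both existence and uniqueness of the $T^\leqslant$-prefix of each $x$, which rests on the antichain property of $T^\leqslant$ (itself a consequence of the idempotency of $(\cdot)^\leqslant$), and then verify that after factoring out $\P(y)$ the residual family is genuinely a prefix-antichain so that Part~3 applies. Everything else is bookkeeping with the multiplicative law and substochasticity, and Parts~1--3 are essentially preparation for this step.
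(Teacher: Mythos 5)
Your proposal is correct and follows essentially the same route as the paper's own proof: Part~1 via partial sums over length-bounded antichain slices, Part~2 by induction on the maximal word length with grouping by second letter, Part~3 by combining the first two, and Part~4 via the map sending each $x\in R^\leqslant$ to its unique prefix in $T^\leqslant$ and bounding fiber-by-fiber. If anything, you are slightly more careful than the paper in Part~4, where you explicitly justify the per-fiber bound $\sum_{x\in F_y}\P(x)\leqslant\P(y)$ by factoring out $\P(y)$ and applying Part~3 to the residual antichain, a step the paper merely asserts.
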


\begin{proof}[sketch]
Part 1: by~\cite[Theorem 0.0.2]{T11}.
Part 2: by induction on the maximum length of a sequence in $R^\leqslant$.
Part 3: by combining parts 1 and 2.
Part 4: via the function $f\colon R^\leqslant\to T^\leqslant$ where $f(x)$ is the unique nonempty prefix of $x$ in $T$ (hence is in $T^\leqslant$).
The full proofs are included in \ifarxivelse{\Cref{app:Proofs2}}{App.~A}.
\end{proof}

\subsection{Path Abstraction}

The path abstraction operation on (substochastic) {\DTMC}s intuitively collapses (all paths through) the abstraction set into (the paths crossing) the set's border states, while preserving the DTMC's overall reachability probabilities.

\begin{definition}
\label{def:path-abstr}
For $S_1\subseteq S$, the \textbf{\emph{path abstraction}} of $M$ over $S_1$ is $M-S_1 \defeq (\P_{S_1}^a,a)\in\M(S,a)$\footnote{Formally, $M-S_1$ again denotes $(-S_1)(M)$ for a function $-S_1\colon\M(S,a)\to\M(S,a)$.} where $\P_{S_1}^a\colon S^+\to[0,\infty]$ is defined as follows:
$$
\P_{S_1}^a(x) =
\begin{cases}
1 &\text{if }x\in S^1\\
0 &\text{if }x\in S^{\geqslant2} \wedge \exists\,i \colon x_i\in(S_1)_0^M\\
\P(x+S_1) &\text{otherwise}
\end{cases}
$$
with $(S_1)_0^M \defeq \set{s\in S_1\setminus\set{a}\mid\P((S\setminus S_1)s)=0}$.
\end{definition}
The set $(S_1)_0^M$ collects the states in the ``interior'' of $S_1$, \ie those with no incoming transition from outside $S_1$.
Note that $M-S_1$ has at most as many transitions as~$M$;
the second case removes all paths (and thus transitions) that pass through $(S_1)_0^M$.
For $S_1,\ldots,S_n\subseteq S$, 
we again write $-(S_1,\ldots,S_n)$ instead of $(-S_n)\circ\cdots\circ(-S_1)$, and $M-(S_1,\ldots,S_n)$ instead of $(-(S_1,\ldots,S_n))(M)$.

\begin{figure}[t]
\centering
\vspace{-1em}
\begin{minipage}[b]{0.49\textwidth}
\centering
\begin{tikzpicture}
\draw
(1.55,-0.4) node(S1){$S_1$}
(0.3, 1) node(ghost1){}
(1, 1) node[circled node](s1){$s_1$}
(2.4, 1) node[circled node](s2){$s_2$}
(1, 2.4) node[circled node](s3){$s_3$}
(2.4, 2.4) node[circled node](s4){$s_4$}
(2.4, -0.4) node[circled node](s5){$s_5$}
(3.8, -0.4) node[circled node](s6){$s_6$}
(3.8, 2.4) node[circled node](s7){$s_7$}
(3.8, 1) node[circled node](s8){$s_8$};
\draw
(4.7, 2.4) node{$1$}
(4.7, 1.0) node{$1$}
(4.0, 0.3) node{$\frac{1}{4}$}
(3.2, -0.9) node{$\frac{1}{2}$}
    (2.8, 0.0) node{$1$}
(3.0, 0.7) node{$\frac{1}{4}$}
(3.4, 1.8) node{$\frac{1}{12}$}
(3.1, 2.7) node{$\frac{1}{6}$}
(1.5, 2.9) node{$\frac{3}{4}$}
(2.0, 2.0) node{$1$}
(2.2, 0.3) node{$\frac{1}{3}$}
(1.5, 1.6) node{$\frac{2}{3}$}
(0.8, 1.7) node{$\frac{1}{6}$}
(1.6, 0.7) node{$\frac{5}{6}$};
\draw[-stealth] (ghost1) -- (s1);
\draw[-stealth] (s1) to[out = 0, in = 180] (s2);
\draw[-stealth] (s1) -- (s3);
\draw[-stealth] (s2) -- (s3);
\draw[-stealth] (s2) -- (s5);
\draw[-stealth] (s4) -- (s7);
\draw[-stealth] (s4) -- (s8);
\draw[-stealth] (s6) -- (s2);
\draw[-stealth] (s3) to[out = -30, in = -150] (s4);
\draw[-stealth] (s4) to[out = 150, in = 30] (s3);
\draw[-stealth] (s5) to[out = 30, in = 150] (s6);
\draw[-stealth] (s6) to[out = -150, in = -30] (s5);
\draw[-stealth] (s6) -- (s8);
\draw[-stealth] (s7) to[out = 30, in = -30, looseness = 8] (s7);
\draw[-stealth] (s8) to[out = 30, in = -30, looseness = 8] (s8);
\draw[densely dotted, line width=0.6pt] (2.4, 1.6)--(1.8,1.6);
\draw[densely dotted, line width=0.6pt] (1.8, 1.6)--(1.8,-1.2);
\draw[densely dotted, line width=0.6pt] (1.8, -1.2)--(4.7, -1.2);
\draw[densely dotted, line width=0.6pt] (4.7,-0.7)--(2.4,1.6);
\draw[densely dotted, line width=0.6pt] (4.7,-0.7)--(4.7,-1.2);
\end{tikzpicture}
\vspace{-4pt}
\caption{DTMC $M_e$ and set $S_1$}
\label{subfig:DTMC_example}
\end{minipage}\hfill
\begin{minipage}[b]{0.49\textwidth}
\centering
\begin{tikzpicture}
\draw
(0.3, 1) node(ghost1){}
(1, 1) node[circled node](s1){$s_1$}
(2.4, 1) node[circled node](s2){$s_2$}
(1, 2.4) node[circled node](s3){$s_3$}
(2.4, 2.4) node[circled node](s4){$s_4$}
(3.8, 2.4) node[circled node](s7){$s_7$}
(3.8, 1) node[circled node](s8){$s_8$};
\draw
(3.1, 0.7) node{$\frac{1}{5}$}
(4.7, 2.4) node{$1$}
(4.7, 1.0) node{$1$}
(3.4, 1.8) node{$\frac{1}{12}$}
(3.1, 2.7) node{$\frac{1}{6}$}
(1.5, 2.9) node{$\frac{3}{4}$}
(2.0, 2.0) node{$1$}
(1.5, 1.6) node{$\frac{4}{5}$}
(0.8, 1.7) node{$\frac{1}{6}$}
(1.6, 0.7) node{$\frac{5}{6}$};
\draw[-stealth] (ghost1) -- (s1);
\draw[-stealth] (s1) to[out = 0, in = 180] (s2);
\draw[-stealth] (s1) -- (s3);
\draw[-stealth] (s2) -- (s3);
\draw[-stealth] (s4) -- (s7);
\draw[-stealth] (s4) -- (s8);
\draw[-stealth] (s3) to[out = -30, in = -150] (s4);
\draw[-stealth] (s4) to[out = 150, in = 30] (s3);
\draw[-stealth] (s7) to[out = 30, in = -30, looseness = 8] (s7);
\draw[-stealth] (s8) to[out = 30, in = -30, looseness = 8] (s8);
\draw[-stealth] (s2)--(s8);
\draw (0.3,-1.1) node(ghost2){};
\end{tikzpicture}
\vspace{-4pt}
\caption{Abstracted \DTMC $M_e-S_1$}
\label{subfig:DTMC_example-s2s5s6}
\end{minipage}
\end{figure}

\begin{example}
\label{ex:path-abstr}
The path abstraction $M_e - S_1$ of $M_e$ over $S_1=\set{s_2,s_5,s_6}$ is visualized in \Cref{subfig:DTMC_example-s2s5s6}.
In the abstraction from $M_e$ to $M-S_1$, the transitions between states from $S_1$ are removed, the states from $(S_1)_0^{M_e}=\set{s_5,s_6}$ are removed from the diagram (as their incoming and outgoing probabilities are set to zero), and the probabilities of the transitions $s_2\to s_3$ and $s_2\to s_8$ are replaced with the probability masses of the sets $s_2S_1^*s_3$ and $s_2S_1^*s_8$, respectively.
\end{example}
For \Cref{def:path-abstr} to be well-defined, we need to confirm that $(\P_{S_1}^a,a)\in\M(S,a)$:

\begin{lemma}
\label{lem:PKa-in-PMF(S)}
In \Cref{def:path-abstr}, we have $\P_{S_1}^a\in\PP(S)$.
\end{lemma}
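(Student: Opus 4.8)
The plan is to verify directly that $\P_{S_1}^a$ satisfies the three defining conditions of $\PP(S)$ from \Cref{def:PMF}, and, along the way, that it is genuinely $[0,1]$-valued (the definition only declares it as a map into $[0,\infty]$). For the range I would first record that the $-S_1$ operation preserves the first letter of a word, so every word in $x+S_1$ begins with $x_1$; hence $x+S_1\subseteq x_1 S^*$ whenever $x\in S^{\geqslant2}$, and \Cref{thm:PMF-main-results}(3) gives $\P(x+S_1)\leqslant1$. Together with the two explicit branches (value $1$ and value $0$) this shows $\P_{S_1}^a(x)\in[0,1]$ for all $x$. Condition~1 is then immediate from the first branch, so the work concentrates on conditions 2 and 3.

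For condition 2 I would fix $s$ and split on whether $s\in(S_1)_0^M$: if it is, every $\P_{S_1}^a(st)$ vanishes. Otherwise $\sum_{t}\P_{S_1}^a(st)=\sum_{t\notin(S_1)_0^M}\P(st+S_1)$, which I would compare with the probability mass of $R\defeq\biguplus_{t\in S}(st+S_1)\subseteq sS^*$. The preimage sets $st+S_1$ are pairwise disjoint, and by \Cref{lem:sequence-abstr-is-monotonic-wrt-prefix-order} no word of $st'+S_1$ can be a proper prefix of a word of $st+S_1$ when $t'\neq t$ (such a prefix would collapse to a length-$2$ prefix of $st$, forcing $t'=t$). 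Hence $R^{\leqslant}=\biguplus_t(st+S_1)^{\leqslant}$, so $\P(R)=\sum_t\P(st+S_1)$, which \Cref{thm:PMF-main-results}(3) bounds by $1$; discarding the terms with $t\in(S_1)_0^M$ only decreases the sum.

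The crux, and the main obstacle, is the multiplicativity condition 3, namely $\P_{S_1}^a(xsy)=\P_{S_1}^a(xs)\,\P_{S_1}^a(sy)$. I would first treat the main case in which $xsy$ meets no state of $(S_1)_0^M$ (so neither does $xs$ nor $sy$). Here \Cref{thm:words-combinatorics-main-results}(2) supplies the decomposition $(xsy+S_1)^{\leqslant}=(xs+S_1)^{\leqslant}\star(sy+S_1)^{\leqslant}$; I would check that the pairing map $(p,q)\mapsto p\star q$ is a bijection (injectivity because two distinct minimal preimages of $xs$ that are both prefixes of one word would contradict minimality), that each $p$ ends in $s$ and each $q$ begins with $s$ so that condition 3 for the original $\P$ gives $\P(p\star q)=\P(p)\,\P(q)$, and then factor the nonnegative double sum to obtain $\P(xsy+S_1)=\P(xs+S_1)\,\P(sy+S_1)$. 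A short separate computation showing $\P(s+S_1)=1$ for any single letter (whether or not $s\in S_1$, its minimal preimage is $s$ itself) lets me rewrite each factor as the matching value of $\P_{S_1}^a$ and absorbs the degenerate lengths $|xs|=1$ or $|sy|=1$. Finally, when some letter of $xsy$ does lie in $(S_1)_0^M$, the left-hand side is $0$ for length $\geqslant2$, and a brief position analysis shows the offending letter forces $\P_{S_1}^a(xs)=0$ or $\P_{S_1}^a(sy)=0$ unless it is a shared length-one factor, which is again settled by the $\P(s+S_1)=1$ identity. I expect the bookkeeping around these length-one and shared-letter boundary cases, rather than the central combinatorial identity, to be the fiddliest part of the argument.
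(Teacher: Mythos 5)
Your proof is correct and follows essentially the same route as the paper's: verify the three axioms of \Cref{def:PMF}, using \Cref{thm:PMF-main-results}, part~3, to bound the values and the row sums (axioms~1--2) and \Cref{thm:words-combinatorics-main-results}, part~2, for multiplicativity (axiom~3). The one place you genuinely deviate is axiom~2: the paper splits on $s\in S_1$ versus $s\notin S_1$ and computes the preimages $st+S_1$ explicitly (as $\set{st}$, $sS_1^*t$, or $\emptyset$), whereas you bound the whole row at once by the mass of the disjoint union $\biguplus_{t\in S}(st+S_1)\subseteq sS^*$, invoking \Cref{lem:sequence-abstr-is-monotonic-wrt-prefix-order} to show that the minimal elements of that union are exactly $\biguplus_{t\in S}(st+S_1)^\leqslant$; this is a valid and arguably more uniform argument, at the price of using the prefix-monotonicity lemma where the paper needs only direct computation of the preimage sets. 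In axiom~3 you also spell out two points the paper leaves implicit: that the pairing $(p,q)\mapsto p\star q$ from \Cref{thm:words-combinatorics-main-results}, part~2, is injective (which is needed to rewrite the sum over $(xsy+S_1)^\leqslant$ as a double sum before factoring it), and that the degenerate length-one and shared-letter cases can be absorbed uniformly via the identity $\P(s+S_1)=1$ rather than dispatched by the paper's brief case distinction on $x=\varepsilon$ or $y=\varepsilon$; both refinements are sound, so your bookkeeping concerns are justified but resolvable exactly as you outline.
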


\begin{proof}[sketch]
We show that $\P_{S_1}^a$ satisfies the three axioms of \Cref{def:PMF} as follows.
Axiom 1: by definition of $\P_{S_1}^a$, case~1.
Axiom 2: by \Cref{thm:PMF-main-results}, part~3.
Axiom 3: by \Cref{thm:words-combinatorics-main-results}, part~2.
The full proof is included in \ifarxivelse{\Cref{app:Proofs3}}{App.~B}.
\end{proof}
The reason why we consider substochastic {\DTMC}s (see \Cref{def:Markov}) is because there exist {\DTMC}s $M\in\M_1(S,a)$ and $S_1\subseteq S$ such that $M-S_1\notin\M_1(S,a)$, e.g. when $S_1$ contains a BSCC, and we want to keep the option open to abstract over sets that contain BSCCs.

\subsection{Monotonic Absorption of Path Abstraction}

We are now in position to prove the two main results of this paper, which state that path abstraction is ``monotonically absorbing'' in the sense that applying it first to subsets of later abstraction sets does not change the final result.\footnote{This could also be seen as a form of \emph{confluence}, but we prefer the term \emph{absorption} to emphasize the invariance of the result \wrt abstracting over \emph{subsets}.}

\begin{theorem}
\label{thm:path-abstr-main-results}
If $S_1\subseteq S_2\subseteq S$, then $\forall\, M\in\M(S,a) \colon M-(S_1,S_2)=M-S_2$.
\end{theorem}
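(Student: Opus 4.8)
The plan is to compare the two sides pointwise on $S^+$. Write $M_1 = M - S_1 = (\P_1, a)$ with $\P_1 = \P_{S_1}^a$; by \Cref{lem:PKa-in-PMF(S)} we have $\P_1 \in \PP(S)$, so $M_1 - S_2 = ((\P_1)_{S_2}^a, a)$ is well defined, and the claim is the identity $(\P_1)_{S_2}^a(x) = \P_{S_2}^a(x)$ for every $x \in S^+$. Both functions follow the three-case split of \Cref{def:path-abstr}; the only discrepancy is that the left-hand one refers to the interior set $(S_2)_0^{M_1}$ while the right-hand one refers to $(S_2)_0^M$. Hence my first step is to show these interior sets coincide, after which the three cases can be matched one at a time.

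To equate the interior sets I would show that every edge entering $S_2$ from outside survives the abstraction over $S_1$ unchanged: for $t \in S \setminus S_2$ and $s \in S_2 \setminus \set{a}$, $\P_1(ts) = \P(ts)$. Since $S_1 \subseteq S_2$ we have $t \notin S_1$, so $t \notin (S_1)_0^M$; if $s \in (S_1)_0^M$ then both sides vanish (the left by the second case of \Cref{def:path-abstr}, the right since $s$ has no incoming edge from $S \setminus S_1 \ni t$), and otherwise $ts$ is a fixed point of $-S_1$ with $(ts + S_1)^\leqslant = \set{ts}$, giving $\P_1(ts) = \P(ts + S_1) = \P(ts)$. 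Summing over $t$ yields $\P_1((S \setminus S_2)s) = \P((S \setminus S_2)s)$, hence $(S_2)_0^{M_1} = (S_2)_0^M =: K$. With a common interior set $K$, cases $1$ and $2$ of the two definitions agree verbatim, and only the ``otherwise'' case remains: for $x \in S^{\geqslant 2}$ with no letter in $K$, show $\P_1(x + S_2) = \P(x + S_2)$.

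For the otherwise case I would expand $\P_1(x + S_2) = \sum_{z \in (x + S_2)^\leqslant} \P_1(z)$ and prune all $z$ that are not fixed points of $-S_1$ (i.e.\ with $z - S_1 \neq z$). The image of $-S_1$ is exactly the set of such fixed points, so a non-fixed $z$ has $z + S_1 = \emptyset$ and therefore $\P_1(z) = \P(z + S_1) = 0$ (and $\P_1(z) = 0$ anyway if $z$ falls into the zero case). For a fixed point $z \in (x + S_2)^\leqslant$ (note $\lvert z\rvert \geqslant \lvert x\rvert \geqslant 2$) I claim $\P_1(z) = \P(z + S_1)$: if no letter of $z$ lies in $(S_1)_0^M$ this is the third case of \Cref{def:path-abstr}; if some $z_i \in (S_1)_0^M$ then $i > 1$, because $z_1$ equals the first letter of $x$, which is not in $K$, and $(S_1)_0^M \subseteq K$ as $\P((S \setminus S_2)s) \leqslant \P((S \setminus S_1)s)$. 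As $z$ is a fixed point, $z_{i-1} \notin S_1$, so every $w \in z + S_1$ has $z_{i-1} z_i$ as a factor with $\P(z_{i-1} z_i) \leqslant \P((S \setminus S_1)z_i) = 0$; multiplicativity (axiom~3 of \Cref{def:PMF}) then forces $\P(z + S_1) = 0 = \P_1(z)$. Thus $\P_1(x + S_2) = \sum_{z \in (x+S_2)^\leqslant,\ z - S_1 = z} \P(z + S_1)$.

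Finally I would invoke \Cref{thm:words-combinatorics-main-results}, part~4, which writes $(x + S_2)^\leqslant = \biguplus \set{(y + S_1)^\leqslant \mid y \in (x + S_2)^\leqslant \land y - S_1 = y}$. Summing $\P$ over this disjoint union gives $\P(x + S_2) = \sum_{y} \sum_{z \in (y + S_1)^\leqslant} \P(z) = \sum_{y} \P(y + S_1)$, where $y$ ranges over the fixed points of $-S_1$ in $(x + S_2)^\leqslant$ --- precisely the expression just derived for $\P_1(x + S_2)$. The two therefore coincide and the otherwise case closes. I expect the main obstacle to be the bookkeeping in this last case: aligning the \emph{zero cases} of the nested abstraction with emptiness of preimages, in particular showing that an interior state $z_i \in (S_1)_0^M$ inside a fixed-point word forces $\P(z + S_1) = 0$ exactly where $\P_1(z) = 0$, and using $S_1 \subseteq S_2$ to rule out $i = 1$. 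Part~4 is custom-built to convert the pruned sum into the clean telescoping above, so once the vanishing contributions are discarded the final identification is immediate.
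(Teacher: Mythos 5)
Your proposal is correct and takes essentially the same route as the paper's own proof: you establish $(S_2)_0^{M-S_1}=(S_2)_0^M$ first, discard the non-fixed-points of $-S_1$ via emptiness of their preimages (the paper's claim ($\clubsuit$)), show that fixed points passing through $(S_1)_0^M$ contribute zero to both sides via the factor $z_{i-1}z_i\in(S\setminus S_1)(S_1)_0^M$ (the paper's claim ($\spadesuit$)), and finish with the disjoint-union decomposition of \Cref{thm:words-combinatorics-main-results}, part~4. The only cosmetic difference is that you prove the pointwise transition identity $\P_{S_1}^a(ts)=\P(ts)$ for edges entering $S_2$, where the paper only proves the equivalence of the two vanishing conditions; the substance is identical.
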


\begin{proof}[sketch]
By \Cref{thm:words-combinatorics-main-results}, part~3, \Cref{thm:words-combinatorics-main-results}, part~4, and \Cref{thm:PMF-main-results}, part~4.
The full proof is included in \ifarxivelse{\Cref{app:Proofs3}}{App.~B}.
\end{proof}
The preceding theorem provides structure in the analysis of path abstracting {\DTMC}s over (many) more than simply two subsets of the state space.

\begin{corollary}
\label{cor:path-abstr-refinement}
Let $S_1,\ldots,S_t\subseteq K \subseteq S$.
Then 
$M-(S_1,\ldots,S_t,K)=M-K$.
\end{corollary}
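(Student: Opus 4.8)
The plan is to obtain the corollary as a repeated application of \Cref{thm:path-abstr-main-results}, peeling off the abstraction sets $S_1,\ldots,S_t$ one at a time, from the innermost outward. The guiding observation is that \Cref{thm:path-abstr-main-results} can be read as the operator identity $(-K)\circ(-S_i)=-K$ on $\M(S,a)$ whenever $S_i\subseteq K$: abstracting over any subset of $K$ immediately before abstracting over $K$ has no effect on the result. Since every $S_i$ satisfies $S_i\subseteq K$ by hypothesis, I expect each $S_i$ to be absorbed into the trailing abstraction over $K$ in turn.

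Concretely, I would argue by induction on $t$. For the inductive step, set $N\defeq M-(S_1,\ldots,S_{t-1})$, so that, unfolding the definition of sequential path abstraction as the composition $(-K)\circ(-S_t)\circ\cdots\circ(-S_1)$, we have $M-(S_1,\ldots,S_t,K)=N-(S_t,K)$. Because path abstraction maps $\M(S,a)$ into itself (\Cref{def:path-abstr} together with \Cref{lem:PKa-in-PMF(S)}), we have $N\in\M(S,a)$, so \Cref{thm:path-abstr-main-results} applies to $N$ with the inclusion $S_t\subseteq K$. This yields $N-(S_t,K)=N-K=M-(S_1,\ldots,S_{t-1},K)$; that is, the length of the sequence preceding $K$ has dropped by one while the remaining sets $S_1,\ldots,S_{t-1}$ are still all contained in $K$. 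The induction hypothesis---that abstracting along any length-$(t{-}1)$ sequence of subsets of $K$ and then over $K$ equals $M-K$---then closes the step. The base case ($t=0$ trivially, or $t=1$ by a single direct appeal to \Cref{thm:path-abstr-main-results}) is immediate.

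The only point that requires genuine care, and the nearest thing to an obstacle, is making sure that \Cref{thm:path-abstr-main-results} is invoked on the intermediate chain $N$ rather than on $M$ itself. This is precisely why that theorem is stated with a universal quantifier over all $M\in\M(S,a)$, and why it was worthwhile to first establish through \Cref{lem:PKa-in-PMF(S)} that each path abstraction stays inside $\M(S,a)$; without closure of $\M(S,a)$ under path abstraction the inductive bookkeeping would not even type-check. Beyond verifying this closure and keeping the composition order straight, the argument is a routine induction with no further computation.
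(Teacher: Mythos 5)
Your proof is correct and takes essentially the same route as the paper's: an induction on $t$ that peels off the subset adjacent to $K$ by applying \Cref{thm:path-abstr-main-results} to the intermediate chain $M-(S_1,\ldots,S_{t-1})$, exploiting that the theorem is quantified over all of $\M(S,a)$, and then closes with the induction hypothesis. Your explicit remark that closure of $\M(S,a)$ under path abstraction (via \Cref{lem:PKa-in-PMF(S)}) is needed for the step to type-check is a point the paper leaves implicit, but it is the same argument.
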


\begin{proof}[sketch]
By induction on $t$ and by \Cref{thm:path-abstr-main-results};
the full proof is in \ifarxivelse{\Cref{app:Proofs3}}{App.~B}.
\end{proof}

\section{DTMC Model Checking by Path Abstraction}
\label{sec:DTMC-model-checking}

The main goal of \DTMC model checking by path abstraction is to obtain the \emph{result} of abstracting a \DTMC $M$ over the set $K \defeq \set{s\in S \mid \P(ss)<1}$ of its non-absorbing states, \ie the DTMC $M-K$.
Clearly, this preserves the self-loops with probability $1$ of all absorbing states, and the only other transitions 
remaining are the transitions $a\rightarrow s$ with $s\in S\setminus K$ absorbing.
They carry precisely the reachability probabilities $\P_K^a(as)=\P(aK^*s)=\P(aS^*s)=\P^a(\diamondsuit\,s)$.\footnote{$\P(aS^*s)=\P^a(\diamondsuit\,s)$ is by \Cref{def:reaching-G}.
The fact that $\P(aK^*s)=\P(aS^*s)$ is seen as follows:
We have $(aS^*s)^\leqslant=a(S\setminus\set{s})^*s$ and $(aK^*s)^\leqslant=aK^*s$, so any $x\in(aS^*s)^\leqslant\setminus(aK^*s)^\leqslant$ must contain a state $x_i\neq s$ that is absorbing, hence, as $x$ ends in $s$ and $x_i$ has outgoing probability zero, has probability zero.}
Then, for any goal set $G\subseteq S\setminus K$ we can simply compute $\P^a(\diamondsuit\,G)=\sum_{s\in G}\P^a(\diamondsuit\,s)$.

\'{A}brah\'{a}m \etal in~\cite{AJWKB10} give two methods to perform DTMC model checking by path abstraction:
The first method just abstracts over each SCC (leaving the algorithm to abstract, \ie solve, each SCC open), while the second method also recursively applies path abstraction within SCCs.
We argue that both methods implicitly generate a finite sequence $S_1,\ldots, S_t\subseteq K$ of subsets of $K$ along which the original \DTMC is abstracted, before it is eventually abstracted over~$K$.
This means that they compute $M-(S_1,\ldots,S_t,K)$, which by \Cref{cor:path-abstr-refinement} is precisely $M-K$:
the desired output (substochastic) \DTMC.
Let us fix $S^{(0)}=\set{\set{s} \mid s\in S\land \P(ss)=0}$ for the set of singleton sets of states that do not have self-loops.

\paragraph{SCC abstraction.}
The first method, given by~\cite[Algorithm 1 alone]{AJWKB10}, is functionally equivalent to transforming $M$ into $M-(U_1,\ldots,U_t, K)$ where $U_1,\ldots,U_t$ is any enumeration of $\Comp{M,K}\setminus S^{(0)}$.
Since every $U_i$ is a subset of $K$, by \Cref{cor:path-abstr-refinement} this in turn is functionally equivalent to transforming $M$ into $M-K$. 

\begin{figure}[t]
    \centering
    \vspace{-1em}
    \begin{minipage}[b]{0.49\textwidth}
    \centering
    \begin{tikzpicture}
    \draw
    (0.3, 1) node(ghost1){}
    (1, 1) node[circled node](s1){$s_1$}
    (2.4, 1) node[circled node](s2){$s_2$}
    (2.4, 2.4) node[circled node](s3){$s_3$}
    (3.8, 2.4) node[circled node](s7){$s_7$}
    (3.8, 1) node[circled node](s8){$s_8$};
    \draw
    (3.1, 0.7) node{$\frac{1}{5}$}
    (4.7, 2.4) node{$1$}
    (4.7, 1.0) node{$1$}
    (3.1, 2.7) node{$\frac{2}{3}$}
    (3.3, 1.9) node{$\frac{1}{3}$}
    (1.5, 1.9) node{$\frac{1}{6}$}
    (2.2, 1.7) node{$\frac{4}{5}$}
    (1.6, 0.7) node{$\frac{5}{6}$};
    \draw[-stealth] (ghost1) -- (s1);
    \draw[-stealth] (s1) to[out = 0, in = 180] (s2);
    \draw[-stealth] (s1) -- (s3);
    \draw[-stealth] (s2) -- (s3);
    \draw[-stealth] (s7) to[out = 30, in = -30, looseness = 8] (s7);
    \draw[-stealth] (s8) to[out = 30, in = -30, looseness = 8] (s8);
    \draw[-stealth] (s2)--(s8);
    \draw[-stealth] (s3)--(s8);
    \draw[-stealth] (s3)--(s7);
\end{tikzpicture}
\caption{\DTMC $M_e-(S_1,S_2)$}
\label{subfig:DTMC_example-s2s5s6-s3s4}
\end{minipage}\hfill
\begin{minipage}[b]{0.49\textwidth}
    \centering
    \begin{tikzpicture}
    \draw
    (0.3, 1) node(ghost1){}
    (1, 1) node[circled node](s1){$s_1$}
    (2.4, 2.4) node[circled node](s7){$s_7$}
    (2.4, 1) node[circled node](s8){$s_8$};
    \draw
    (3.3, 2.4) node{$1$}
    (3.3, 1.0) node{$1$}
    (1.5, 1.9) node{$\frac{5}{9}$}
    (1.6, 0.7) node{$\frac{4}{9}$};
    \draw[-stealth] (ghost1) -- (s1);
    \draw[-stealth] (s1) to[out = 0, in = 180] (s2);
    \draw[-stealth] (s1) -- (s3);
    \draw[-stealth] (s7) to[out = 30, in = -30, looseness = 8] (s7);
    \draw[-stealth] (s8) to[out = 30, in = -30, looseness = 8] (s8);
\end{tikzpicture}
\caption{The final \DTMC $M_e-K$}
\label{subfig:DTMC_example-K}
\end{minipage}
\end{figure}

\begin{example}
\label{ex:first-method}
If we apply the first method to $M_e$ from \Cref{subfig:DTMC_example}, then we abstract over $S_1=\set{s_2,s_5,s_6}$, $S_2=\set{s_3,s_4}$, and finally $K=\set{s_1,\ldots,s_6}$.
We obtain the sequence $M_e$, $M_e-S_1$, $M_e-(S_1,S_2)$, and $M_e-(S_1,S_2,K)=M_e-K$, depicted in Figs.~\ref{subfig:DTMC_example}, \ref{subfig:DTMC_example-s2s5s6}, \ref{subfig:DTMC_example-s2s5s6-s3s4}, and \ref{subfig:DTMC_example-K}, respectively.
From \Cref{subfig:DTMC_example-K} it is clear that $\P^{s_1}(\diamondsuit\,s_7)=\frac{5}{9}$ and $\P^{s_1}(\diamondsuit\,s_8)=\frac{4}{9}$, where $\P$ is the transition probability function of the first \DTMC $M_e$ as depicted in \Cref{subfig:DTMC_example}, but, due to \Cref{cor:path-abstr-refinement}, in expressing the reachability probabilities of $\diamondsuit\,s_7$ and $\diamondsuit\,s_8$, can be replaced with the transition probability function of \emph{any} of the abstracted {\DTMC}s we computed along the way.
\tikzset{circled node/.style={circle,draw, inner sep=0, minimum size=1.5em},
nodestyle/.style={circled node,text height=.8em,text depth=.25em}}
\end{example}

\paragraph{Recursive abstraction.}
The second method, described by \cite[Algorithm 1 using Algorithm~2]{AJWKB10}, uses a recursive approach: 
To abstract an SCC $S_1$, it abstracts each SCC of $(S_1)_0^M$ (which is the interior of $S_1$ from \Cref{def:path-abstr}), and so on.
We capture this by a new operator~$\ominus$:
$$
M\ominus S_1 := (M\ominus(U_1,\ldots,U_t))-S_1
$$
where
(i)~$S_1 \subseteq S$ is strongly connected (\ie $\P(sS_1^*t)>0$ for all distinct $s,t\in S_1$),
(ii)~$(S_1)_0^M\subsetneq S_1$, and
(iii)~$U_1,\ldots,U_t$ is an enumeration of $\Comp{M,(S_1)_0^M}\setminus S^{(0)}$.
If $U_1,\ldots,U_t$ in condition (iii) is empty, $\ominus$ returns $M - S_1$ as a base case.

At some point, $\ominus$ will be applied to an $S_1$ for which $(S_1)_0^M$ does not contain non-trivial {\SCC}s, and then regular ``$-$''-abstractions are computed bottom-up.\footnote{Condition (ii) is required for the initial $S_1$ to ensure termination, too; see \ifarxivelse{\Cref{app:differences-with-AJWKB10}}{App.~C}.}
In this way, the input \DTMC $M$ is, again, ``$-$''-abstracted along a sequence of subsets of $K$ and finally over $K$ itself, which by \Cref{cor:path-abstr-refinement} yields $M-K$ again. 

\begin{figure}[t]
    \centering
\begin{minipage}[b]{0.49\textwidth}
\centering
\begin{tikzpicture}
    \draw
    (0.3, 1) node(ghost1){}
    (1, 1) node[circled node](s1){$s_1$}
    (2.4, 1) node[circled node](s2){$s_2$}
    (1, 2.4) node[circled node](s3){$s_3$}
    (2.4, 2.4) node[circled node](s4){$s_4$}
    (3.8, -0.4) node[circled node](s5){$s_5$}
    (3.8, 2.4) node[circled node](s7){$s_7$}
    (3.8, 1) node[circled node](s8){$s_8$};
    \draw
    (4.7, 2.4) node{$1$}
    (4.7, 1.0) node{$1$}
    (4.0, 0.3) node{$\frac{1}{2}$}
    (2.8, -0.1) node{$\frac{1}{2}$}
    (3.1, 1.0) node{$\frac{1}{3}$}
    (3.4, 1.8) node{$\frac{1}{12}$}
    (3.1, 2.7) node{$\frac{1}{6}$}
    (1.5, 2.9) node{$\frac{3}{4}$}
    (2.0, 2.0) node{$1$}
    (1.5, 1.6) node{$\frac{2}{3}$}
    (0.8, 1.7) node{$\frac{1}{6}$}
    (1.6, 0.7) node{$\frac{5}{6}$};
    \draw[-stealth] (ghost1) -- (s1);
    \draw[-stealth] (s1) to[out = 0, in = 180] (s2);
    \draw[-stealth] (s1) -- (s3);
    \draw[-stealth] (s2) -- (s3);
    \draw[-stealth] (s2) to[out = -20, in = 110] (s5);
    \draw[-stealth] (s5) to[out = 160, in = -70] (s2);
    \draw[-stealth] (s4) -- (s7);
    \draw[-stealth] (s4) -- (s8);
    \draw[-stealth] (s3) to[out = -30, in = -150] (s4);
    \draw[-stealth] (s4) to[out = 150, in = 30] (s3);
    \draw[-stealth] (s7) to[out = 30, in = -30, looseness = 8] (s7);
    \draw[-stealth] (s8) to[out = 30, in = -30, looseness = 8] (s8);
    \draw[-stealth] (s5)--(s8);
\end{tikzpicture}
\vspace{10pt}
\caption{\DTMC $M_e-S_0$}
\label{fig:DTMC_example-s5s6}
\end{minipage}\hfill
\begin{minipage}[b]{0.49\textwidth}
\centering
\begin{tikzpicture}
    \draw
    (0.3, 1) node(ghost1){}
    (1, 1) node[circled node](s1){$s_1$}
    (3.8, 1) node[circled node](s2){$s_2$}
    (2.4, -0.4) node[circled node](s5){$s_5$}
    (3.8, -0.4) node[circled node](s6){$s_6$}
    (3.8, 2.4) node[circled node](s7){$s_7$}
    (2.4, 1) node[circled node](s8){$s_8$};
    \draw
    (4.7, 2.4) node{$1$}
    (2.4, 1.9) node{$1$}
    (4.0, 0.3) node{$\frac{1}{4}$}
    (3.2, -0.9) node{$\frac{1}{2}$}
    (3.2, -0.3) node{$1$}
    (3.0, 0.7) node{$\frac{1}{4}$}
    (4.0, 1.6) node{$\frac{4}{9}$}
    (1.7, 1.3) node{$\frac{13}{54}$}
    (3.2, 1.3) node{$\frac{2}{9}$}
    (2.2, 2.6) node{$\frac{13}{27}$}
    (2.7, 0.2) node{$\frac{1}{3}$}
    (1.4, 0.2) node{$\frac{5}{18}$};
    \draw[-stealth] (ghost1) -- (s1);
    \draw[-stealth] (s2) -- (s5);
    \draw[-stealth] (s6) -- (s2);
    \draw[-stealth] (s5) to[out = 30, in = 150] (s6);
    \draw[-stealth] (s6) to[out = -150, in = -30] (s5);
    \draw[-stealth] (s6) -- (s8);
    \draw[-stealth] (s7) to[out = 30, in = -30, looseness = 8] (s7);
    \draw[-stealth] (s8) to[out = 120, in = 60, looseness = 8] (s8);
    \draw[-stealth] (s2) -- (s7);
    \draw[-stealth] (s1) to[out = 90, in = 180] (s7);
    \draw[-stealth] (s1) -- (s5);
    \draw[-stealth] (s1) -- (s8);
    \draw[-stealth] (s2) -- (s8);
\end{tikzpicture}
\vspace{-6pt}
\caption{\DTMC $M_e-\set{s_1,s_2,s_3,s_4}$}
\label{fig:DTMC_example-s1s2s3s4}
\end{minipage}
\end{figure}

\begin{example}
Consider $M_e$ from \Cref{subfig:DTMC_example} again.
Now following the recursive structure of ``$\ominus$'', one additional path abstraction is computed compared to \Cref{ex:first-method}:
over $S_0=\set{s_5,s_6}$ at the beginning.
Our sequence of subsets of $K$ becomes $(S_0,S_1,S_2)$,
and our sequence of {\DTMC}s becomes $M_e$, $M_e-S_0$, $M_e-(S_0,S_1)=M_e-S_1$\footnote{Note that this holds by \Cref{thm:path-abstr-main-results}, as $S_0\subset S_1$.}, $M_e-(S_1,S_2)$, and $M_e-K$,
depicted in Figs.~\ref{subfig:DTMC_example}, \ref{fig:DTMC_example-s5s6}, \ref{subfig:DTMC_example-s2s5s6}, \ref{subfig:DTMC_example-s2s5s6-s3s4}, and \ref{subfig:DTMC_example-K}, respectively.
\tikzset{circled node/.style={circle,draw, inner sep=0, minimum size=1.5em},
nodestyle/.style={circled node,text height=.8em,text depth=.25em}}
\end{example}
Our formulation of the two methods is slightly more general, and fixes a small mistake, compared to \cite{AJWKB10}.
We provide detailed notes on the differences in \ifarxivelse{\Cref{app:differences-with-AJWKB10}}{App.~C}.

\paragraph{Counterexample refinement.}
Suppose we choose a threshold $\lambda\in[0,1]$ and wish to 
determine whether $\P^a(\diamondsuit\,s)\leqslant\lambda$. 
Of course, we can abstract over $K$ straight away 
to immediately compute $\P^a(\diamondsuit\,s)=\P_K^a(as)$ and check the threshold.
However, due to \Cref{cor:path-abstr-refinement}, we can now choose \emph{any} sequence of $t\in\NN^+$ subsets $S_1,\ldots,S_t\subseteq K$ of $K$ as interesting as we like, and iteratively compute a sequence of (substochastic) {\DTMC}s according to $M_0:=M$ and $M_i:=M_{i-1}-S_i$ for $i=1,2,\ldots$ ($i\leqslant t$) until we hit a \DTMC $M_i$ that has a path from $a$ to $s$ with probability $>\lambda$.
Then we can go back along the sequence to trace what path set (which is a subset of $aK^*s$) was responsible for exceeding the threshold.

Our contribution here is that $S_1,\ldots,S_t$ does not have to follow the \SCC structure as in \cite{AJWKB10}, but can be \emph{any} finite sequence of subsets of $K$, thus extending the method well beyond the connectivity of the underlying digraph of the \DTMC.

\begin{figure}[t]
        \centering
    \vspace{-1em}
\end{figure}

\begin{example}
Consider $M_e$ from \Cref{subfig:DTMC_example} and reachability constraint $\P^{s_1}(\diamondsuit\,s_7)\leqslant\frac{4}{9}$.
It is not immediate from \Cref{subfig:DTMC_example} that this constraint is violated by $M_e$;
neither is it clear from the {\DTMC}s from Figs.~\ref{fig:DTMC_example-s5s6}, \ref{subfig:DTMC_example-s2s5s6}, and \ref{subfig:DTMC_example-s2s5s6-s3s4}, which are the intermediate abstractions in the recursive \SCC-based method.
Instead of following a sequence of subsets of the state space implicitly generated by any \SCC-based method, we can, for the \DTMC $M_e$ from \Cref{subfig:DTMC_example}, compute the abstraction $M_e-\set{s_1,s_2,s_3,s_4}$ (without having to abstract over $K=\set{s_1,\ldots,s_6}$ itself), as depicted in \Cref{fig:DTMC_example-s1s2s3s4}.
Then, we can see that the reachability constraint $\P^{s_1}(\diamondsuit\,s_7)\leqslant\frac{4}{9}$ for the original \DTMC $M_e$ in \Cref{subfig:DTMC_example} is violated, since the transition $(s_1,s_7)$ in the \DTMC of \Cref{fig:DTMC_example-s1s2s3s4} has probability $\frac{13}{27}>\frac{4}{9}$---without having to abstract over other parts of the \DTMC.
\tikzset{circled node/.style={circle,draw, inner sep=0, minimum size=1.5em},
nodestyle/.style={circled node,text height=.8em,text depth=.25em}}
\end{example}

\section{Numerical Methods}
\label{sec:num-methods}

Let us briefly recall some numerical methodology typically used to compute reachability probabilities for DTMC model checking.
Good reference material includes~\cite[Sec.~10.1.1]{BK08} and~\cite{GM02}.
Our goal is to design a reference implementation to compute path abstractions, more abstract than that of~\cite[Algorithm~3]{AJWKB10} and  executable.
Throughout, given $\P\in\PP(S)$,
let $\P_2\in\RR^{S\times S}$ denote the transition probability matrix corresponding to $\P$---i.e., $\P_2(s,t)=\P(st)$ for all $s,t\in S$.

\begin{theorem}
\label{thm:num-methods}
        (1)~The matrix $\P_2$ is substochastic, and
        (2)~if $S_1\subseteq S$, $i\in\NN^+$, and $s,r\in S_1$, then $\P(sS_1^{i-1}r)=\P_2(S_1)^i(s,r)$.
\end{theorem}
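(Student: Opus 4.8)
The plan is to handle the two claims separately, since claim (1) is essentially a restatement of the axioms while claim (2) contains the real content.

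For (1), I would argue directly from \Cref{def:PMF}. By definition $\P_2(s,t)=\P(st)$, and since $\P$ takes values in $[0,1]$ we immediately get $\P_2(s,t)\geqslant 0$ for all $s,t\in S$. Axiom~2 of \Cref{def:PMF} gives $\sum_{t\in S}\P_2(s,t)=\sum_{t\in S}\P(st)\leqslant 1$ for every $s\in S$. These are exactly the defining conditions of a substochastic matrix, so (1) holds with no further work.

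For (2), I would proceed by induction on $i$, with axiom~3 (the multiplicativity $\P(u\star v)=\P(u)\P(v)$) as the engine. The preliminary observation used at every step is that all words in $sS_1^{i-1}r$ have the same length $i+1$, so the set is a prefix-antichain: no element is a proper prefix of another. Hence $(sS_1^{i-1}r)^{\leqslant}=sS_1^{i-1}r$, and therefore $\P(sS_1^{i-1}r)=\sum_{w\in S_1^{i-1}}\P(swr)$, turning the probability of a set into a plain finite sum over words. For the base case $i=1$ we have $S_1^{0}=\set{\varepsilon}$, so $sS_1^{0}r=\set{sr}$ and $\P(\set{sr})=\P(sr)=\P_2(s,r)=\P_2(S_1)^1(s,r)$, using that $s,r\in S_1$. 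For the inductive step, assuming the claim for $i$ and all $s,r\in S_1$, I would partition $S_1^{i}$ by first letter, writing each $w\in S_1^{i}$ as $w=tw'$ with $t\in S_1$ and $w'\in S_1^{i-1}$, and apply axiom~3 at the position of $t$ (i.e. $stw'r=(st)\star(tw'r)$) to obtain
\[
\P(sS_1^{i}r)=\sum_{t\in S_1}\P(st)\sum_{w'\in S_1^{i-1}}\P(tw'r)=\sum_{t\in S_1}\P_2(s,t)\,\P(tS_1^{i-1}r),
\]
where the second equality reuses the antichain observation for $tS_1^{i-1}r$. By the induction hypothesis $\P(tS_1^{i-1}r)=\P_2(S_1)^i(t,r)$, so the right-hand side is $\sum_{t\in S_1}\P_2(S_1)(s,t)\,\P_2(S_1)^i(t,r)=\P_2(S_1)^{i+1}(s,r)$ by the definition of matrix multiplication over the index set $S_1$, closing the induction.

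I expect no deep obstacle here; the difficulty is purely bookkeeping. The one point that genuinely needs care is the repeated justification that $\P$ of the set $sS_1^{i-1}r$ equals the plain sum $\sum_{w}\P(swr)$, which relies on the equal-length/antichain argument so that the $(\cdot)^{\leqslant}$ operator acts trivially; and keeping the index ranges aligned, so that the intermediate letters and the matrix-multiplication summation indices both range over $S_1$ while $s,r\in S_1$ stay fixed, ensuring the submatrix power $\P_2(S_1)^i$ is the correct object. An equally short alternative would first establish the telescoping identity $\P(x_1\cdots x_n)=\prod_{j=1}^{n-1}\P(x_jx_{j+1})$ by induction on $n$, and then recognize $\sum_{w\in S_1^{i-1}}\P(swr)$ directly as the closed-form expansion of the $(s,r)$ entry of $\P_2(S_1)^i$; I would nonetheless keep the induction on $i$ as the primary argument, since it invokes axiom~3 exactly once per step.
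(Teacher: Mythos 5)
Your proof is correct and follows essentially the same route as the paper's: part (1) read off directly from Axiom~2 of \Cref{def:PMF}, and part (2) by induction on $i$, using the equal-length/antichain observation $(sS_1^{i-1}r)^{\leqslant}=sS_1^{i-1}r$ together with Axiom~3 to split off one transition per step. The only (immaterial) difference is that you peel off the \emph{first} transition $(st)\star(tw'r)$, i.e.\ use $\P_2(S_1)^{i+1}=\P_2(S_1)\cdot\P_2(S_1)^i$, whereas the paper peels off the \emph{last} one, $x\star rt$, corresponding to $\P_2(S_1)^{i+1}=\P_2(S_1)^i\cdot\P_2(S_1)$.
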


\begin{proof}[sketch]
Part 1: by \Cref{def:PMF}--2.
Part 2: by induction on~$i$.
The full proof is included in \ifarxivelse{\Cref{app:Proofs5}}{App.~D}.
\end{proof}
Now, we transform the above results to reachability probabilities for DTMC.
\begin{theorem}
\label{thm:num-methods-path-abstr}
    Let $S_1\subsetneq S$ and set
    \begin{align*}
        \I&:=S_1\setminus(S_1)_0^M, &
        \O&:=\set{t\in S\setminus S_1\mid\P(S_1t)>0},\\
        \U&:=\set{r\in S_1\mid\P(rS_1^*\O)>0}, &
        \U_1&:=\set{r\in S_1\mid\P(r\O)>0}.
    \end{align*}
    Suppose that $\I\cap\U$ and $\O$ are nonempty. Let $s\in\I\cap\U$ and $t\in\O$.
    Then, we have
    \begin{equation}
    \P_{S_1}^a(st)=\P(st+S_1)=\P(sS_1^*t)=Q(s,\U_1)\P_2(\U_1,t)
    \end{equation}
    where $Q(\I\cap\U,\U_1)$ is extracted from the unique solution $Q\in\RR^{\U\times\U_1}$ of the system of equations
    $(\1-\P_2(\U))\cdot Q=\1(\U,\U_1)$.
\end{theorem}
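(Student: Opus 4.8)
The plan is to verify the three equalities in the displayed chain one at a time, from left to right. The first equality $\P_{S_1}^a(st)=\P(st+S_1)$ is immediate from \Cref{def:path-abstr}: I would check that $st$ falls into the third (``otherwise'') case. Since $s\in\I\cap\U\subseteq S_1\setminus(S_1)_0^M$ we have $s\notin(S_1)_0^M$, and since $t\in\O\subseteq S\setminus S_1$ we have $t\notin S_1\supseteq(S_1)_0^M$; thus no letter of $st$ lies in $(S_1)_0^M$, and as $st\in S^{\geqslant2}$ the definition yields $\P_{S_1}^a(st)=\P(st+S_1)$. For the second equality I would prove the set identity $st+S_1=sS_1^*t$ for any $s\in S_1$, $t\in S\setminus S_1$, whence equality of probability masses follows. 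Unfolding $-S_1$: if $x'-S_1=st$, then reading $st$ as the two-letter word $(s)(t)$ over $S_1^+\uplus(S\setminus S_1)^+$, the word $x'$ must begin with a maximal $S_1$-block whose first letter is $s$ (collapsing to $s$) followed by exactly the single-letter $(S\setminus S_1)$-block $t$; hence $x'=swt$ with $w\in S_1^*$, i.e. $x'\in sS_1^*t$, and conversely every such word collapses to $st$.

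The heart of the argument is the last equality $\P(sS_1^*t)=Q(s,\U_1)\,\P_2(\U_1,t)$. First I would observe that $sS_1^*t$ is prefix-free (a word ending in $t\in S\setminus S_1$ cannot be a proper prefix of another word of this shape), so, using also that $\P(sS_1^*t)\leqslant1$ by \Cref{thm:PMF-main-results}, part~3, the mass is the absolutely convergent sum $\P(sS_1^*t)=\sum_{w\in S_1^*}\P(swt)$. Decomposing each path by its last state $r\in S_1$ before the step to $t$, and splitting $\P(swt)=\P(sw)\,\P(rt)$ via \Cref{def:PMF}--3 (the $\star$-factorization), nonnegativity lets me rearrange the sum into $\P(sS_1^*t)=\sum_{r\in S_1}V(s,r)\,\P_2(r,t)$, where $V(s,r)$ is the total mass of all $S_1$-internal paths from $s$ to $r$. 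Since $\P_2(r,t)>0$ forces $r\in\U_1$, only $r\in\U_1$ contribute. By \Cref{thm:num-methods}, part~2, $V(s,r)=\sum_{i\geqslant0}\P_2(S_1)^i(s,r)$; and since every $S_1$-internal path between two states of $\U$ stays inside $\U$ (each intermediate state still reaches $\O$ within $S_1$, hence lies in $\U$, so the off-$\U$ summands vanish), for $s,r\in\U$ this equals $\sum_{i\geqslant0}\P_2(\U)^i(s,r)$.

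The main obstacle is justifying convergence of this Neumann series and invertibility of $\1-\P_2(\U)$ --- precisely the reason $\U$, not $S_1$, appears in the statement. I would show $\P_2(\U)$ is transient: every $r\in\U$ reaches $\O$ within $S_1$ staying in $\U$ until the exiting step, so for each $r$ there is a step count $k_r$ with $\sum_{r'\in\U}\P_2(\U)^{k_r}(r,r')<1$ (positive probability of having left $\U$ by step $k_r$). As these row sums are non-increasing in the exponent and $\U$ is finite, there are $K$ and $c<1$ with $\sum_{r'\in\U}\P_2(\U)^{K}(r,r')\leqslant c$ for all $r\in\U$; iterating gives row sums of $\P_2(\U)^{nK}$ bounded by $c^n\to0$, so $\P_2(\U)^m\to0$, the spectral radius is below $1$, the series $\sum_{i\geqslant0}\P_2(\U)^i$ converges, and $\1-\P_2(\U)$ is invertible with this sum as its inverse.

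Finally I would close the chain. From the system $(\1-\P_2(\U))\,Q=\1(\U,\U_1)$ and invertibility we get $Q=(\1-\P_2(\U))^{-1}\1(\U,\U_1)$; since $\U_1\subseteq\U$, its entries satisfy $Q(s,r')=(\1-\P_2(\U))^{-1}(s,r')=V(s,r')$ for all $r'\in\U_1$ and $s\in\U$ (in particular for $s\in\I\cap\U$). Substituting into $\P(sS_1^*t)=\sum_{r\in\U_1}V(s,r)\,\P_2(r,t)$ yields $\P(sS_1^*t)=\sum_{r\in\U_1}Q(s,r)\,\P_2(r,t)=Q(s,\U_1)\,\P_2(\U_1,t)$, completing the proof.
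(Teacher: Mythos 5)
Your proof is correct and follows essentially the same route as the paper's: the same case analysis giving $\P_{S_1}^a(st)=\P(st+S_1)=\P(sS_1^*t)$, the same decomposition of the paths in $sS_1^*t$ by the last state in $\U_1$ before $t$, and the same identification of the internal path masses with the Neumann series $\sum_{i\geqslant0}\P_2(\U)^i=(\1-\P_2(\U))^{-1}$, restricted from $S_1$ to $\U$ by observing that positive-probability internal paths to states of $\U$ never leave $\U$. If anything, your row-sum transience argument is more complete than the paper's treatment, which asserts in a single line that the spectral radius of $\P_2(\U)$ is below $1$ because every state of $\U$ has a positive-probability exit path.
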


\begin{proof}[sketch]
By showing that $\sum_{i=0}^\infty\P_2(\U)^i$ converges by noting that the spectral radius of $\P_2(\U)$ is smaller than 1, which follows from the definition of $\U$, and combining this with \Cref{def:path-abstr}.
The full proof is included in \ifarxivelse{\Cref{app:Proofs5}}{App.~D}.
\end{proof}
Now, as the computations of the sets $\I$, $\O$ and $\U_1$ are obvious but the computation of $\U$ is not, let us present a reachability algorithm to compute $\U$.

\begin{lemma}
\label{lem:num-methods-path-abstr-compute-U}
    Consider \Cref{thm:num-methods-path-abstr}.
    Then, $\U$ can be computed as follows:
    \begin{enumerate}
        \item Set $\U_0:=\O$, set $\V_0:=\U_0$, and set $i:=0$.
        \item While $\U_i\neq\varnothing$,
        set $\U_{i+1}:=\set{r\in S_1\setminus\V_i\mid\P(r\U_i)>0}$, set $\V_{i+1}:=\V_i\uplus\U_{i+1}$,
        and set $i:=i+1$. 
        \item Output $\V_i\setminus\O$, which is precisely $\U$.
    \end{enumerate}
\end{lemma}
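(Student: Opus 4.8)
The plan is to recognize the loop as a \emph{backward breadth-first search} from $\O$ through $S_1$ and to show that the layers it produces are exactly the states of $S_1$ that can reach $\O$ while staying inside $S_1$. First I would record two structural facts. By construction $\U_{i+1}\subseteq S_1\setminus\V_i$ and $\V_{i+1}=\V_i\uplus\U_{i+1}$, so the sets $\U_0,\U_1,\ldots$ are pairwise disjoint and $\V_0\subsetneq\V_1\subsetneq\cdots$ grows strictly as long as $\U_i\neq\varnothing$; since $S$ is finite, the loop terminates, say at index $k$ with $\U_k=\varnothing$, and the output is $\V_k\setminus\O=\biguplus_{j\geqslant1}\U_j$ (all layers with $j\geqslant1$ lie in $S_1$ and are thus disjoint from $\O$). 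Second, since every word in $rS_1^*\O$ has all intermediate letters in $S_1$ and $\O\cap S_1=\varnothing$, no proper nonempty prefix lies in $rS_1^*\O$, so $(rS_1^*\O)^\leqslant=rS_1^*\O$; hence $\P(rS_1^*\O)>0$ iff some word $r\cdots o\in rS_1^*\O$ has positive probability, and likewise $\P(r\O)>0$ iff $\P(ro)>0$ for some $o\in\O$. This turns the claim $\U=\biguplus_{j\geqslant1}\U_j$ into a set equality about the existence of positive-probability $S_1$-internal paths to $\O$, which I prove by the two inclusions.

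For soundness ($\biguplus_{j\geqslant1}\U_j\subseteq\U$) I induct on $j$. The base $j=1$ is immediate: $r\in\U_1$ gives $\P(r\O)>0$, so $\P(ro)>0$ for some $o\in\O$, and $ro\in rS_1^*\O$ witnesses $r\in\U$. For the step, $r\in\U_{j+1}$ yields some $s\in\U_j\subseteq S_1$ with $\P(rs)>0$; by the induction hypothesis $s\in\U$, so some $w'=s\cdots o\in sS_1^*\O$ has $\P(w')>0$. Since $r,s\in S_1$, I glue the two paths at $s$ using the $\star$ operation and apply axiom~3 of \Cref{def:PMF}: $\P\bigl((rs)\star w'\bigr)=\P(rs)\,\P(w')>0$, and $(rs)\star w'=rs\cdots o\in rS_1^*\O$, so $r\in\U$.

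For completeness ($\U\subseteq\biguplus_{j\geqslant1}\U_j$) I induct on the number $e$ of edges of a positive-probability word $w=r\,a_1\cdots a_{e-1}\,o\in rS_1^*\O$, proving the sharper statement ``$r\in\V_e$''. When $e=1$, $w=ro$ gives $\P(r\O)>0$ and $r\in S_1\setminus\V_0$, so $r\in\U_1\subseteq\V_1$. For $e+1$, the first intermediate letter $a_1\in S_1$ starts a word $w'=a_1\cdots o\in a_1S_1^*\O$ with $e$ edges, and by axiom~3, $\P(w)=\P(ra_1)\,\P(w')$ forces $\P(ra_1)>0$ and $\P(w')>0$; the induction hypothesis gives $a_1\in\V_e$, i.e.\ $a_1\in\U_j$ for some $1\leqslant j\leqslant e$. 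If $r\in\V_j$ then, as $r\in S_1$ and so $r\notin\O$, the state $r$ already lies in some layer $\U_l$ with $l\leqslant j$; otherwise $r\in S_1\setminus\V_j$ together with $\P(r\U_j)\geqslant\P(ra_1)>0$ puts $r\in\U_{j+1}$. Either way $r\in\V_{e+1}$. Since every $r\in\U$ admits such a word, we get $r\in\V_k\setminus\O$, completing $\U=\V_k\setminus\O$.

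The main obstacle I anticipate is the bookkeeping in the completeness step: the clause $S_1\setminus\V_i$ forces the layers to be disjoint, so a state reached by a length-$e$ path need not sit in layer $\U_e$ but only in some earlier layer. This is why the induction must be phrased as ``$r\in\V_e$'' rather than ``$r\in\U_e$'', with a case split on whether $r$ has already been enqueued. The only genuinely DTMC-specific ingredient is the repeated, careful use of the multiplicativity axiom (through the $\star$ operation) to split and recombine positive-probability paths at a shared state; everything else is elementary reachability reasoning on a finite state space.
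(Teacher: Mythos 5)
Your proof is correct and takes essentially the same approach as the paper's: the paper's own proof just asserts the BFS-layer invariant (each $\U_i$ with $i>0$ is the set of states of $S_1$ whose shortest $S_1$-internal path to $\O$ uses exactly $i$ transitions) and concludes by finiteness of $S_1$, which is exactly what your soundness/completeness inductions establish rigorously, including the path-splitting via axiom~3 that the paper treats as obvious. Your choice to phrase the completeness induction as ``$r\in\V_e$'' rather than pinning down the exact layer is a clean way to handle the disjointness bookkeeping that the paper leaves implicit.
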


\begin{proof}[sketch]
Straightforward.
The full proof is included in \ifarxivelse{\Cref{app:Proofs5}}{App.~D}.
\end{proof}
The observation now is that \Cref{thm:num-methods-path-abstr} and \Cref{lem:num-methods-path-abstr-compute-U} provide a full recipe for a numerical implementation of path abstraction:
We compute the sets $\I$, $\O$, $\U_1$, and $\U$,
and solve $(\1-\P_2(\U))\cdot Q=\1(\U,\U_1)$ for $Q$.
Now, we build the matrix $Y\in\RR^{S\times S}$ using matrix comprehension as follows.
We iterate over $(s,t)\in S\times S$, and we set, using the ``if-then-else blocks'' $[\cdot,\cdot,\cdot]$,
\begin{equation}
\label{eq:compute-Y}
\begin{split}
    Y(s,t):=[&s\in S\setminus S_1\land t\in(S\setminus S_1)\uplus\I,\P_2(s,t),\\&[s\in\I\cap\U\land t\in\O,Q(s,\U_1)\P_2(\U_1,t),0]],
\end{split}
\end{equation}
and then we have determined $(\P_{S_1}^a)_2=Y$ fully.
Indeed:
Whenever $s,t\in S_1$ then $st+S_1=\varnothing$;
whenever $s\in S_1\setminus\I$ or $t\in S_1\setminus\I$ then $\P_{S_1}^a(st)=0$;
whenever $s,t\in S\setminus S_1$ then we keep the original probability value of the transition $(s,t)$;
whenever $s\in S\setminus S_1$ and $t\in\I$ then we have $\P_{S_1}^a(st)=\P(stS_1^*)=\P(st)$ as well as $(stS_1^*)^\leqslant=\set{st}$;
whenever $s\in\I\setminus\U$ and $t\in\O$ then $\P_{S_1}^a(st)=0$ because $\P(sS_1^*t)=0$ in this case (by definition of $\U$);
and whenever $s\in\I\cap\U$ and $t\in\O$ then we have $\P_{S_1}^a(st)=Q(s,\U_1)\P_2(\U_1,t)$ by \Cref{thm:num-methods-path-abstr}.

\section{Reference Implementation}
\label{sec:ref_impl}

\lstdefinelanguage{gp}
{
    morecomment=[l]{\\},
    morecomment=[s]{/*}{*/},
    morekeywords={if,while,prod,sum, concat,for, vecmin,factorint, matsize,gcd, vector, break, matrix, matid, matsolve, vecextract, vecsum, pathAbstr}
}
\definecolor{lightgray}{rgb}{0.95,0.95,0.95}
\definecolor{darkgray}{rgb}{0.3,0.3,0.3}
\lstset{
    classoffset=0,
    language={gp},
    breaklines=true,
    basicstyle=\fontsize{8.5pt}{12pt}\ttfamily,
    linewidth=\textwidth,
    backgroundcolor=\color{white},
    captionpos=b, 
    extendedchars=true, 
    tabsize=2, 
    columns=fixed, 
    xleftmargin=0.0\textwidth,
    xrightmargin=0.0\textwidth,
    keepspaces=true, 
    showstringspaces=false, 
    breaklines=true, 
    frame=rl  bt, 
    frameround=tttt;
    framesep=4pt, 
    keywordstyle=\color{black}\bfseries,
    framesep=4pt, 
    numbers=left, 
    numberstyle=\tiny\ttfamily, 
    commentstyle=\color{gray}, 
    stringstyle=\color{red}, 
    identifierstyle=\color{black},
}
\counterwithout{lstlisting}{section}
\lstinputlisting[float=t, abovecaptionskip=8pt, label={listing:ref-impl}, caption={Computing $M-K$ where $M\in\M(S,a)$ and $K\subseteq S$}]{reference_implementation.gp}

Now, we are in position to apply the numerical methodology to compute path abstractions of (substochastic) \DTMC.
We give a high-level reference implementation in a semi-functional setting, as close as possible to the mathematical reasoning of the recipe provided at the end of \Cref{sec:num-methods}.
We use the \lang{PARI/GP} computer algebra system~\cite{PARI2}, and write a \lang{GP} script that can be interpreted by the \lang{GP} interpreter.
Here, {\DTMC}s are represented as pairs whose first element is a square matrix carrying the probabilities and whose second element is the initial state.
The state space is implicitly understood as the set $[n]=\set{1,\ldots,n}$, where $n$ is the dimension of the matrix.
Computing $M_e-S_1$ (\Cref{subfig:DTMC_example-s2s5s6}) from $M_e$ (\Cref{subfig:DTMC_example}), for example, can be done with\\[2pt]
\centerline{\ttfamily {\color{black}\bfseries pathAbstr}([X, 1])([0,1,0,0,1,1,0,0])}\\[2pt]
where {\ttfamily X} is the $8\times 8$ matrix carrying the probabilities of~$M_e$.
The output will be {\ttfamily [Y, 1]}, where {\ttfamily Y} is the matrix carrying the probabilities of the abstracted \DTMC $M_e-S_1$.
We refer to \ifarxivelse{\Cref{app:ref-impl-analysis}}{App.~E} for a full analysis of \Cref{listing:ref-impl}.

\section{Conclusion and Future Work}
\label{sec:Conclusion}

We have recast the path abstraction approach to DTMC model checking of \'{A}brah\'{a}m \etal~\cite{AJWKB10} in terms of the free monoid on the DTMC's state space.
In this setting, the procedure can be expressed elegantly and concisely; in proofs of similar qualities, we have shown that path abstraction can be applied to any number of subsets of the final abstraction set in any order.
This generalizes the approach of~\cite{AJWKB10}, and makes counterexample refinement much more flexible.

We note that path abstraction is similar in spirit, but effectively very different, from the \emph{state elimination} method used for parametric~\cite{Daw04,HHZ11} and exact probabilistic model checking~\cite{HJKQV22}.
In particular, where state elimination may incur an intermediate blowup in transitions, path abstraction never generates more transitions.
We plan to properly study the complexity of path abstraction, and it may then be interesting to investigate if a combination of state elimination and path abstraction could bring together the simplicity of the former with the scalability of the latter.
Just like state elimination, path abstraction should straightforwardly generalize to expect-reward properties, too.

Our reference implementation arguably comes with a relatively high level of trustworthiness given its short distance from the underlying numerical recipes, but is not practically efficient.
It can however be extremely useful as a baseline for testing and comparison for future lower-level, optimized implementations \eg in C or C\# as part of the \tool{mcsta} model checker~\cite{HH15} of the \toolset~\cite{HH14}.

To attain the highest level of trust, we shall formalize the concepts and algorithms of this paper in the interactive theorem prover Isabelle/HOL to (1)~machine-check the proofs and
(2)~derive a verified correct-by-construction \emph{and} fast implementation in LLVM bytecode using the Isabelle Refinement Framework~\cite{LT12,Lam13}, as recently done in probabilistic model checking for SCC finding~\cite{HKL23}, MEC decomposition~\cite{HKL24}, and finally the interval iteration algorithm~\cite{KSAHL25}.

\paragraph{Acknowledgments.}
The authors thank
Milan Lopuhaä-Zwakenberg for inspiring us to use free monoids where suitable---specifically, in the analysis of \emph{repeatedly} applying path abstraction---and use transition probability matrices for the numerical analysis of path abstractions as separate operations,
and Benedikt Peterseim for providing detailed feedback on drafts of this paper.

\bibliographystyle{splncs04}
\bibliography{paper}

\iftoggle{arxiv}{%
\clearpage
\appendix
\crefalias{section}{appendix} 
\crefalias{subsection}{appendix} 
\section{Proofs of Results of \Cref{sec:Background}}
\label{app:Proofs2}

\begin{proof}[of \Cref{lem:sequence-abstr-is-monotonic-wrt-prefix-order}]
    Let $\Sigma$ be a finite set, and let $\Sigma_1\subseteq\Sigma$.
    Let $x\in\Sigma^+$ (the case $x=\varepsilon$ is clear).
    To show that $x'-\Sigma_1\leqslant x-\Sigma_1$ for all $x'\leqslant x$, it suffices to prove that $x'-\Sigma_1\leqslant x-\Sigma_1$ for the maximal strict prefix $x'$ of $x$, as the general statement follows inductively from the transitivity of the prefix order.
    Thus, write $x=x'a$ where $a\in\Sigma$.
    Then, we obtain
    \begin{align*}
        x-\Sigma_1&=(x'a)-\Sigma_1
        \\&=[a\in\Sigma_1,[x'\in\Sigma^*\Sigma_1,x'-\Sigma_1,(x'-\Sigma_1)a],(x'-\Sigma_1)a]
        \\&\geqslant x'-\Sigma_1.
    \end{align*}
\end{proof}

\begin{proof}[of \Cref{thm:words-combinatorics-main-results}]
    Let $\Sigma$ be a finite set.
    \begin{enumerate}
        \item Let $\Sigma_1\subseteq\Sigma$, let $x,y\in\Sigma^*$, and let $a\in\Sigma$,
        and assume that $xa\notin\Sigma^*\Sigma_1^2$.
        Since $xa\notin\Sigma^*\Sigma_1^2$, we have $xa-\Sigma_1\in\Sigma^*a$.
        Note that $ay-\Sigma_1\in a\Sigma^*$. It follows that $(xa-\Sigma_1)\star(ay-\Sigma_1)$ exists.
        Now, note that $x\notin\Sigma^*\Sigma_1$ if $a\in\Sigma_1$.
        It follows that
        \begin{align*}
            xay-\Sigma_1&=[a\in\Sigma_1,x(ay)-\Sigma_1,(xa)y-\Sigma_1]
            \\&=[a\in\Sigma_1,(x-\Sigma_1)(ay-\Sigma_1),(xa-\Sigma_1)(y-\Sigma_1)]
            \\&=[a\in\Sigma_1,((x-\Sigma_1)a)\star(ay-\Sigma_1),
            (xa-\Sigma_1)\star(a(y-\Sigma_1))]
            \\&=[a\in\Sigma_1,(xa-\Sigma_1)\star(ay-\Sigma_1),(xa-\Sigma_1)\star(ay-\Sigma_1)]
            \\&=(xa-\Sigma_1)\star(ay-\Sigma_1).
        \end{align*}
        \item Let $\Sigma_1\subseteq\Sigma$, let $x,y\in\Sigma^*$, and let $a\in\Sigma$.
        \begin{enumerate}[label=($\subseteq$)]
        \item Let $u\in(xa+\Sigma_1)^\leqslant$ and $v\in(ay+\Sigma_1)^\leqslant$.
        Then, $u-\Sigma_1=xa$ and $v-\Sigma_1=ay$,
        and no $u'<u$ satisfies $u'-\Sigma_1=xa$ and no $v'<v$ satisfies $v'-\Sigma_1=ay$.
        Since no $u'<u$ satisfies $u'-\Sigma_1=xa$, it follows that $u\notin\Sigma^*\Sigma_1^2$.
        It follows that $u\in\Sigma^*a$.
        Since clearly $v\in a\Sigma^*$, it follows that $u\star v$ exists.
        By part 1 of \Cref{thm:words-combinatorics-main-results} we obtain $u\star v-\Sigma_1=(u-\Sigma_1)\star(v-\Sigma_1)=xa\star ay=xay$,
        so $u\star v\in xay+\Sigma_1$.
        Now, if there exists $w'<u\star v$ such that $w'-\Sigma_1=xay$, then $u\star v\in\Sigma^*\Sigma_1^2$ hence $u\in\Sigma^*\Sigma_1^2$ or $v\in\Sigma^*\Sigma_1^2$ contradicting at least one of the two knowns that no $u'<u$ satisfies $u'-\Sigma_1=xa$ and no $v'<v$ satisfies $v'-\Sigma_1=ay$.
        It follows that $u\star v\in(xay+\Sigma_1)^\leqslant$.
        \end{enumerate}
        \begin{enumerate}[label=($\supseteq$)]
        \item Let $z\in(xay+\Sigma_1)^\leqslant$.
        Then, $z-\Sigma_1=xay$, and no $z'<z$ satisfies $z'-\Sigma_1=xay$.
        Now, let $u$ be the smallest prefix of $z$ such that $u-\Sigma_1=xa$. Write $v\in\Sigma^*$ for the unique suffix of $z$ such that $z=u\star v$.
        Since prefixes of $u$ are prefixes of $z$, we have $u\in(xa+\Sigma_1)^\leqslant$.
        Clearly, $u\notin\Sigma^*\Sigma_1^2$.
        Now, we have $xay=z-\Sigma_1=u\star v-\Sigma_1=(u-\Sigma_1)\star(v-\Sigma_1)=xa\star(v-\Sigma_1)$ by part 1 of \Cref{thm:words-combinatorics-main-results},
        so $xay=xa\star(v-\Sigma_1)$ and hence $v-\Sigma_1=ay$.
        Now, if there exists $v'<v$ with $v'-\Sigma_1=ay$, then $u\star v'<z$, but, also, by again part 1 of \Cref{thm:words-combinatorics-main-results},
        $(u\star v')-\Sigma_1=(u-\Sigma_1)\star(v'-\Sigma_1)=xa\star ay=xay$ so $z'=u\star v'$ satisfies $z'<z$ and $z'-\Sigma_1=xay$, which contradicts $z\in(xay+\Sigma_1)^\leqslant$.
        Hence, no such $v'<v$ exists, and it follows that $v\in(ay+\Sigma_1)^\leqslant$.
        Since $u\in (xa+\Sigma_1)^\leqslant$, it follows that
        $z=u\star v\in(xa+\Sigma_1)^\leqslant\star(ay+\Sigma_1)^\leqslant$.
        \end{enumerate}
        \item Let $\Sigma_1\subseteq\Sigma_2\subseteq\Sigma$, and let $x\in\Sigma^*$.
        It is clear that $x-(\Sigma_1,\Sigma_2)=x-\Sigma_2$ if we can show that each letter $a\in\Sigma_2^+$ of $x$ when we consider $x$ as a word of minimal length over the alphabet $\Sigma_2^+\uplus(\Sigma\setminus\Sigma_2)^+$ has the property $a-(\Sigma_1,\Sigma_2)=a-\Sigma_2$,
        since $\Sigma_1\subseteq\Sigma_2$.
        To this end, let $a\in\Sigma_2^+$.
        Since $\Sigma_1\subseteq\Sigma_2$, it must be that
        $a-\Sigma_1\in a_1\Sigma_2^*$ and hence $a-(\Sigma_1,\Sigma_2)=a_1$.
        But we also have $a-\Sigma_2=a_1$ directly.
        Conclude that $a-(\Sigma_1,\Sigma_2)=a-\Sigma_2$ and hence that $x-(\Sigma_1,\Sigma_2)=x-\Sigma_2$.
        \item Let $\Sigma_1\subseteq\Sigma_2\subseteq\Sigma$, and let $x\in\Sigma^*$.
        First, we show that the union written as is is indeed a disjoint union.
        To this end, let $y,z\in(x+\Sigma_2)^\leqslant$ satisfy $y-\Sigma_1=y$ and $z-\Sigma_1=z$.
        Suppose that there exists $u\in(y+\Sigma_1)^\leqslant\cap(z+\Sigma_1)^\leqslant$.
        Then, we have $u-\Sigma_1=y$ and $u-\Sigma_1=z$ so $y=z$.
        \begin{enumerate}[label=($\subseteq$)]
        \item Let $y\in(x+\Sigma_2)^\leqslant$.
        By part 3 of \Cref{thm:words-combinatorics-main-results},
        we have $y-(\Sigma_1,\Sigma_2)=y-\Sigma_2=x$
        so $y-\Sigma_1\in x+\Sigma_2$.
        Now, if $w<y-\Sigma_1$ satisfies $w-\Sigma_2=x$,
        then there exists $y'<y$ with $w=y'-\Sigma_1$.
        By part 3 of \Cref{thm:words-combinatorics-main-results}, however, we obtain
        $y'-\Sigma_2=y'-(\Sigma_1,\Sigma_2)=w-\Sigma_2=x$
        so $y'\in x+\Sigma_2$ which contradicts $y\in(x+\Sigma_2)^\leqslant$.
        It follows that $y-\Sigma_1\in(x+\Sigma_2)^\leqslant$.
        Furthermore, we have $(y-\Sigma_1)-\Sigma_1=y-\Sigma_1$ (by part 3 of \Cref{thm:words-combinatorics-main-results})
        and so it follows that $((y-\Sigma_1)+\Sigma_1)^\leqslant$ is one of the sets appearing in the disjoint union.
        To show that $y$ is in the disjoint union it then suffices to show that $y\in((y-\Sigma_1)+\Sigma_1)^\leqslant$,
        which is equivalent to $y\notin\Sigma^*\Sigma_1^2$.
        But this is clear, since \emph{if} $y\in\Sigma^*\Sigma_1^2$, then $y\in\Sigma^*\Sigma_2^2$ as well (as $\Sigma_1\subseteq\Sigma_2$)
        so there exists a strict prefix of $y$ in $x+\Sigma_2$ which contradicts the fact that $y\in(x+\Sigma_2)^\leqslant$.
        It follows that $y\in((y-\Sigma_1)+\Sigma_1)^\leqslant$ and we are done.
        \end{enumerate}
        \begin{enumerate}[label=($\supseteq$)]
            \item Let $y\in(x+\Sigma_2)^\leqslant$ be such that $y-\Sigma_1=y$, and let $z\in(y+\Sigma_1)^\leqslant$.
            By part 3 of \Cref{thm:words-combinatorics-main-results} we have
            $z-\Sigma_2=(z-\Sigma_1)-\Sigma_2=y-\Sigma_2=x$
            so $z\in x+\Sigma_2$.
            Now, suppose that there exists $z'<z$ with
            $z'-\Sigma_2=x$.
            By \Cref{lem:sequence-abstr-is-monotonic-wrt-prefix-order},
            we have $z'-\Sigma_1\leqslant z-\Sigma_1$.
            By part 3 of \Cref{thm:words-combinatorics-main-results},
            we have $(z'-\Sigma_1)-\Sigma_2=z'-\Sigma_2=x$.
            It follows that $z'-\Sigma_1\in x+\Sigma_2$.
            Since $z'-\Sigma_1\leqslant z-\Sigma_1=y\in(x+\Sigma_2)^\leqslant$ and $z'-\Sigma_1\in x+\Sigma_2$,
            we must in fact have the equality $z'-\Sigma_1=z-\Sigma_1$.
            But we have $z'<z$ so $z\in\Sigma^*\Sigma_1^2$,
            which, in turn, contradicts $z\in(y+\Sigma_1)^\leqslant$.
            It follows that such $z'<z$ does not exist, and it follows that $z\in(x+\Sigma_2)^\leqslant$.
        \end{enumerate}
    \end{enumerate}
\end{proof}

\begin{proof}[of \Cref{thm:PMF-main-results}]
    Let $S$ be a finite set, and let $\P\in\PP(S)$.
    \begin{enumerate}
        \item Let $R\subseteq S^+$.
        Due to~\cite[Theorem 0.0.2]{T11}, the order of summation of countably many nonnegative real numbers is irrelevant to obtain values in $[0,\infty]$.
        Furthermore, any subset $A\subseteq R^\leqslant$ of $R^\leqslant$ has the property that $A^\leqslant=A$. It follows that
        \begin{align*}
            \P(R)&=\sum_{x\in R^\leqslant}\P(x)
            \\&=\sum_{k\in\NN^+}\sum_{x\in R^\leqslant\cap S^k}\P(x)
            \\&=\lim_{k\to\infty}\sum_{j=0}^k\sum_{x\in R^\leqslant\cap S^j}\P(x)
           \\&=\lim_{k\to\infty}\sum_{x\in R^\leqslant\cap S^{\leqslant k}}\P(x)
            \\&=\lim_{k\to\infty}\sum_{x\in(R^\leqslant\cap S^{\leqslant k})^\leqslant}\P(x)
            \\&=\lim_{k\to\infty}\P(R^\leqslant\cap S^{\leqslant k}).
        \end{align*}
        \item Let $s\in S$, and let $R\subset sS^*$ be finite.
        The result is clear for the case $R=\varnothing$. Thus, suppose that $R\neq\varnothing$.
        We proceed by induction on the \emph{horizon} $h(R^\leqslant):=\max\set{m\in\NN\mid R^\leqslant\cap S^m\neq\varnothing}$ of $R^\leqslant$, which always exists as a positive integer as $R$ is finite and $R$ contains a nonempty sequence.
        Throughout, note that for any $T\subseteq S^+$ we have $\P(T)=\P(T^\leqslant)$ as $(T^\leqslant)^\leqslant=T^\leqslant$.
        We incorporate two base cases.
        \begin{itemize}
            \item \emph{Case }$h(R^\leqslant)=1$.
            Then, $R^\leqslant=\set{s}$ so $\P(R)=1\leqslant1$. \checkmark
            \item \emph{Case }$h(R^\leqslant)=2$.
            Then, $R^\leqslant\subseteq sS$, because if $s\in R^\leqslant$ then we would have $h(R^\leqslant)=1$.
            By part 4 of \Cref{thm:PMF-main-results} and by \Cref{def:PMF}--2, we obtain $\P(R)=\P(R^\leqslant)\leqslant\P(sS)=\sum_{t\in S}\P(st)\leqslant1$. \checkmark
        \end{itemize}
        Now, we proceed to the induction step.
        \begin{itemize}
            \item Let $H\geqslant2$ be an integer, and suppose that $\P(T)\leqslant1$ holds for \emph{all} finite $T\subset S^+$ with $h(T^\leqslant)\leqslant H$ and $T\subset s_TS^*$ where $s_T\in S$.
            Now, let $s\in S$, and suppose $R\subset sS^*$ is finite with $h(R^\leqslant)=H+1$.
            Then, we can decompose $R^\leqslant$ into a finite disjoint union $R^\leqslant=\biguplus_{i=1}^tsR_i$ where $t\in\NN^+$ and each $R_i\subset S^*$ is finite with $h(R_i^\leqslant)\leqslant H$ and satisfies $R_i\subset s_iS^*$ for some $s_i\in S$.
            Since $sR_i\subseteq R^\leqslant$ for each $i$ we have $(sR_i)^\leqslant=sR_i$ and hence $R_i^\leqslant=R_i$.
            Carrying on, we arrive at
            \begin{align*}
                \P(R)&=\sum_{x\in R^\leqslant}\P(x)
                \\&=\sum_{i=1}^t\sum_{x\in sR_i}\P(x)
                \\&=\sum_{i=1}^t\sum_{x\in R_i}\P(sx)
                \\&=\sum_{i=1}^t\sum_{x\in R_i^\leqslant}\P(sx)
                \\&=\sum_{i=1}^t\sum_{x\in R_i^\leqslant}\P(ss_i\star x)\\&=\sum_{i=1}^t\sum_{x\in R_i^\leqslant}\P(ss_i)\P(x)
                \\&=\sum_{i=1}^t\P(ss_i)\sum_{x\in R_i^\leqslant}\P(x)
                \\&=\sum_{i=1}^t\P(ss_i)\P(R_i)
                \\&\leqslant\sum_{i=1}^t\P(ss_i)
                \leqslant 1,
            \end{align*}
            where the one-to-final inequality in the above follows from $\P(R_i)\leqslant1$ for all $i=1,\ldots,t$, which in turn follows from the induction hypothesis, and the final equality in the above follows from \Cref{def:PMF}-2. \checkmark
        \end{itemize}
        \item Let $s\in S$, and let $R\subseteq sS^*$, with $R$ not necessarily finite.
        If $k\in\NN$ then $R^\leqslant\cap S^{\leqslant k}\subset sS^*$ is finite, so from part 2 of \Cref{thm:PMF-main-results} it follows that $\P(R^\leqslant\cap S^{\leqslant k})\leqslant1$.
        From part 1 of \Cref{thm:PMF-main-results}, we then obtain
        $\P(R)=\lim_{k\to\infty}\P(R^\leqslant\cap S^{\leqslant k})\leqslant1$.
        \item Let $R\subseteq T\subseteq S^+$.
        Since we do not necessarily have $R^\leqslant\subseteq T^\leqslant$---a counterexample over the Greek alphabet would be $R=\set{\alpha\beta}$, $T=\set{\alpha\beta,\alpha}$ for which we have $R^\leqslant=\set{\alpha\beta}$ but $T^\leqslant=\set{\alpha}$---we construct a function $f:R^\leqslant\to T^\leqslant$ that maps subsets of $R^\leqslant$ to their least common prefix in $T$ (which is in $T^\leqslant$).
        To this end, for $x\in R^\leqslant$, observe that $x\in T$, so there exists a unique nonempty prefix $f(x)\leqslant x$ of $x$ in $T^\leqslant$.
        The partition of $R^\leqslant$ is now given as the disjoint union of the pre-images $f^{-1}(y)=\set{x\in R^\leqslant\mid f(x)=y}$ of $y\in T^\leqslant$ under $f$, and we note that each such set has the property that $\P(f^{-1}(y))\leqslant\P(y)$, as all $x\in f^{-1}(y)$ have $y$ as a common nonempty prefix.
        Since $(f^{-1}(y))^\leqslant=f^{-1}(y)$ for all $y\in T^\leqslant$ as $f^{-1}(y)\subseteq R^\leqslant$ for all $y\in T^\leqslant$, it follows that $\P(R)=\P(R^\leqslant)=\sum_{y\in T^\leqslant}\P(f^{-1}(y))\leqslant\sum_{y\in T^\leqslant}\P(y)=\P(T)$ (where for every occurence of $y\in T^\leqslant$ for which there is no $x\in R^\leqslant$ with $f(x)=y$, we simply have $f^{-1}(y)=\varnothing$ and $\P(f^{-1}(y))=0$ as $\P(\varnothing)=0$).
    \end{enumerate}
\end{proof}

\section{Proofs of Results of \Cref{sec:path-abstr}}
\label{app:Proofs3}

\begin{proof}[of \Cref{lem:PKa-in-PMF(S)}]
    Let $S$ be a finite set, let $a\in S$, let $M=(\P,a)\in\M(S,a)$,
    and let $S_1\subseteq S$.
    The fact that $\P_{S_1}^a(x)\leqslant1$ for all $x\in S^+$ is immediate from the fact that $\P_{S_1}^a(x)$ is one if $|x|=1$, is zero if $|x|>1$ and $x$ passes through $(S_1)_0^M$, and is the $\P$-probability mass of the set $x+S_1\subseteq x_1S^*$ otherwise.
    In the latter case, we have $\P_{S_1}^a(x)=\P(x+S_1)\leqslant\P(x_1S^*)\leqslant1$ where the respective inequalities follow from parts 4 and 3 of \Cref{thm:PMF-main-results}.
    In all cases, $\P_{S_1}^a(x)\leqslant1$.
    Next, we prove that $\P_{S_1}^a$ satisfies the three defining properties of $\PP(S)$ as given in \Cref{def:PMF}.
    \begin{enumerate}
        \item By \Cref{def:path-abstr}, we have $\P_{S_1}^a(x)=1$ for all $x\in S^1$.
        \item Let $s\in S$.
        Clearly, $\P_{S_1}^a(st)\leqslant\P(st+S_1)$ for all $t\in S$.
        Now, if $s\notin S_1$, then we have $\sum_{t\in S}\P_{S_1}^a(st)\leqslant\sum_{t\in S}\P(st+S_1)
        =\sum_{t\in S_1}\P(stS_1^*)+\sum_{t\in S\setminus S_1}\P(st)=\sum_{t\in S}\P(st)\leqslant1$
        since $\P(stS_1^*)=\P(st)$ for all $t\in S_1$ (as $(stS_1^*)^\leqslant=\set{st}$ for all $t\in S_1$).
        If $s\in S_1$, then clearly for all $t\in S_1$ we have $st+S_1=\varnothing$ and for all $t\in S\setminus S_1$ we have $st+S_1=sS_1^*t$.
        It follows that
        $\sum_{t\in S}\P_{S_1}^a(st)\leqslant\sum_{t\in S\setminus S_1}\P(sS_1^*t)$.
        The observation now is that for all $t\in S\setminus S_1$, we have $(sS_1^*t)^\leqslant=sS_1^*t$.
        Since $sS_1^*t$ and $sS_1^*r$ are disjoint for distinct $t,r\in S\setminus S_1$,
        we obtain $(\uplus_{t\in S\setminus S_1}sS_1^*t)^\leqslant=\uplus_{t\in S\setminus S_1}sS_1^*t$.
        It follows that $\sum_{t\in S\setminus S_1}\P(sS_1^*t)=\P(\uplus_{t\in S\setminus S_1}sS_1^*t)\leqslant1$,
        where the inequality follows from part 3 of \Cref{thm:PMF-main-results} as $\uplus_{t\in S\setminus S_1}sS_1^*t\subseteq sS^*$.
        Conclude that, again, $\sum_{t\in S}\P_{S_1}^a(st)\leqslant1$.
        \item Let $x,y\in S^*$ and $s\in S$.
        If $xs$ or $sy$ passes through $(S_1)_0^M$ then so does $xsy$, so $\P_{S_1}^a(xs)\P_{S_1}^a(sy)$ and $\P_{S_1}^a(xsy)$ are both zero (if $x\neq\varepsilon\lor y\neq\varepsilon$) or both $1$ (if $x=\varepsilon\land y=\varepsilon$) hence equal.
        Thus, suppose that $xs$ and $sy$ do \emph{not} pass through $(S_1)_0^M$. Hence, so does $xsy$.
        Now, if $x=\varepsilon\lor y=\varepsilon$ then it is immediate that  $\P_{S_1}^a(xs)\P_{S_1}^a(sy)=\P_{S_1}^a(xsy)$.
        Thus, suppose now that $x$ and $y$ are both nonempty.
        To prove that $\P_{S_1}^a(xs)\P_{S_1}^a(sy)=\P_{S_1}^a(xsy)$ in this case, we thus must show that $\P(xs+S_1)\P(sy+S_1)=\P(xsy+S_1)$.
        Note that $xsy=xs\star sy$.
        By part 2 of \Cref{thm:words-combinatorics-main-results}, we then obtain
        \begin{align*}
            \P(xsy+S_1)&=\sum_{z\in(xsy+S_1)^\leqslant}\P(z)
            \\&=\sum_{u\in(xs+S_1)^\leqslant}\sum_{v\in(sy+S_1)^\leqslant}\P(u\star v)
            \\&=\sum_{u\in(xs+S_1)^\leqslant}\sum_{v\in(sy+S_1)^\leqslant}\P(u)\P(v)
            \\&=\left(\sum_{u\in(xs+S_1)^\leqslant}\P(u)\right)
            \left(\sum_{v\in(sy+S_1)^\leqslant}\P(v)\right)
            \\&=\P(xs+S_1)\P(sy+S_1),
        \end{align*}
        completing the proof.
    \end{enumerate}
\end{proof}

\begin{proof}[of \Cref{thm:path-abstr-main-results}]
    Let $S$ be a finite set, and let $a\in S$.
        Let $S_1\subseteq S_2\subseteq S$, and let $M=(\P,a)\in\M(S,a)$.
        We first show that $(S_1)_0^M\subseteq(S_2)_0^M$.
        Let $s\in(S_1)_0^M$.
        Then, $s\in S_1\setminus\set{a}$ and $\P((S\setminus S_1)s)=0$.
        Since $s\in S_1\setminus\set{a}$ and $S_1\subseteq S_2$, we have $s\in S_2\setminus\set{a}$ as well.
        Since $S_1\subseteq S_2$, we have $S\setminus S_2\subseteq S\setminus S_1$,
        so $(S\setminus S_2)s\subseteq(S\setminus S_1)s$, so, by part 4 of \Cref{thm:PMF-main-results},
        we obtain $\P((S\setminus S_2)s)\leqslant\P((S\setminus S_1)s)=0$ so $\P((S\setminus S_2)s)=0$ as well.
        It follows that $s\in(S_2)_0^M$.
        \par Now, we show that $(S_2)_0^M=(S_2)_0^{M-S_1}$. This is equivalent to the assertion that for every $s\in S_2\setminus\set{a}$, we have $\P((S\setminus S_2)s)=0\Leftrightarrow\P^a_{S_1}((S\setminus S_2)s)=0$.
        This is seen as follows. Let $s\in S_2\setminus\set{a}$.
        If $s\in(S_1)_0^M$ then $\P((S\setminus S_2)s)\leqslant\P((S\setminus S_1)s)=0$ where the inequality follows from \Cref{thm:PMF-main-results}--4 and the equality follows from the definition of $(S_1)_0^M$ (conforming to \Cref{def:path-abstr}), and $\P_{S_1}^a((S\setminus S_2)s)=0$ directly by \Cref{def:path-abstr}, which proves the equivalence in case $s\in(S_1)_0^M$.
        If $s\notin(S_1)_0^M$, then for all $r\in S\setminus S_2$ we have
        $\P_{S_1}^a(rs)=[s\in S_1,\P(rsS_1^*),\P(rs)]=\P(rs)$
        as $(rsS_1^*)^\leqslant=\set{rs}$ if $s\in S_1$, which proves the equivalence in case $s\notin(S_1)_0^M$.
        \par Next, let $x\in S^+$. If $|x|=1$ then clearly
        $(\P_{S_1}^a)_{S_2}^a(x)=1$ and $\P_{S_2}^a(x)=1$ by \Cref{def:path-abstr}
        so $(\P_{S_1}^a)_{S_2}^a(x)=\P_{S_2}^a(x)$. Thus, suppose now that $|x|\geqslant2$. If $x$ passes through $(S_2)_0^M=(S_2)_0^{M-S_1}$ then clearly $(\P_{S_1}^a)_{S_2}^a(x)=0$ and $\P_{S_2}^a(x)=0$.
        \par Thus, assume that $x$ does not pass through $(S_2)_0^M$.
        Since $(S_1)_0^M\subseteq(S_2)_0^M$,
        $x$ does not pass through $(S_1)_0^M$ either.
        We now claim that ($\spadesuit$) $\P(y+S_1)=0$ for all $y\in(x+S_2)^\leqslant$ such that $y-S_1=y$ and $y$ passes through $(S_1)_0^M$.
        Let $y$ be as such. Then, $y$ also passes through $(S_2)_0^M$ at least once.
        Let $i$ be any index such that $y_i\in(S_1)_0^M$ (and hence $y_i\in(S_2)_0^M$).
        Since $y-S_2=x$ does \emph{not} pass through $(S_2)_0^M$, however, it must be that $i>1$ and $y_{i-1}\in S_2$.
        Since $y-S_1=y$ and $y_i\in S_1$, it is impossible that $y_{i-1}\in S_1$, so we have $y_{i-1}\in S_2\setminus S_1$.
        In other words, we have found a factor $y_{i-1}y_i\sqsubseteq y$ of $y$ with $y_{i-1}y_i\in(S_2\setminus S_1)(S_1)_0^M$.
        Now, if $z\in y+S_1$, then $y_{i-1}y_i\sqsubseteq z$ as well as $y_{i-1}\notin S_1$.
        Since $y_{i-1}y_i$ is a factor of $z$ and $y_{i-1}y_i\in(S\setminus S_1)(S_1)_0^M$, it follows by definition of $(S_1)_0^M$ that
        $\P(z)\leqslant\P(y_{i-1}y_i)=0$,
        and hence that $\P(y+S_1)=0$.
        \par Furthermore, if $y\in S^*$ satisfies $y-S_1\neq y$, then if there exists $z\in y+S_1$ then $z-S_1=y$ and $z-S_1=z-(S_1,S_1)=y-S_1\neq y=z-S_1$ by part 3 of \Cref{thm:words-combinatorics-main-results}, which is absurd, and hence $y+S_1=\varnothing$.
        It follows that, ($\clubsuit$) whenever $y\in S^{\geqslant2}$ satisfies $y-S_1\neq y$ we have $\P_{S_1}^a(y)\leqslant\P(y+S_1)=\P(\varnothing)=0$.
        It follows, as $x\in S^{\geqslant2}$ does not pass through $(S_2)_0^M=(S_2)_0^{M-S_1}$, that
        \begin{align*}
            (\P_{S_1}^a)_{S_2}^a(x)
            &=\P_{S_1}^a(x+S_2)
            \\&=\sum_{y\in(x+S_2)^\leqslant}\P_{S_1}^a(y)
            \\&=\sum_{y\in(x+S_2)^\leqslant\mid y-S_1=y}\P_{S_1}^a(y)\quad\text{(by (}\clubsuit\text{))}
            \\&=\sum_{y\in(x+S_2)^\leqslant\mid y-S_1=y}\P(y+S_1)\quad\text{(by (}\spadesuit\text{))}
            \\&=\sum_{y\in(x+S_2)^\leqslant\mid y-S_1=y}\sum_{z\in(y+S_1)^\leqslant}\P(z)
            \\&=\sum_{z\in\biguplus\set{(y+S_1)^\leqslant\mid y\in(x+S_2)^\leqslant\land y-S_1=y}}\P(z)
            \\&=\sum_{z\in(x+S_2)^\leqslant}\P(z)\quad\text{(by part 4 of \Cref{thm:words-combinatorics-main-results})}
            \\&=\P(x+S_2)
            \\&=\P_{S_2}^a(x)\quad\text{(as }x\text{ does not pass through }(S_2)_0^M\text{)},
        \end{align*}
        so for the case that $x\in S^{\geqslant2}$ does \emph{not} pass through $(S_2)_0^M$ we have $(\P_{S_1}^a)_{S_2}^a(x)=\P_{S_2}^a(x)$ as well.
        Conclude that $M-(S_1,S_2)=M-S_2$.
\end{proof}

\begin{proof}[of \Cref{cor:path-abstr-refinement}]
    Let $S$ be a finite set, and let $a\in S$.
    Let $K\subseteq S$, and let $S_1,\ldots,S_t\subseteq K$.
    We prove that $M-(S_1,\ldots,S_t,K)=M-K$ for every $M\in\M(S,a)$ by induction on $t\in\NN^+$.
    For $t=1$, the statement is equivalent to \Cref{thm:path-abstr-main-results}.
    Now, let $t\in\NN^+$, and suppose that $M-(S_1,\ldots,S_t,K)=M-K$ for every $M\in\M(S,a)$ for any choice of $S_1,\ldots,S_t\subseteq K$.
    Now, let $S_1,\ldots,S_{t+1}\subseteq K$, and let $M\in\M(S,a)$.
    Then, we have
    \begin{align*}
        M-(S_1,\ldots,S_{t+1},K)
        &=(M-(S_1,\ldots,S_t))-(S_{t+1},K)
        \\&=(M-(S_1,\ldots,S_t))-K\quad\text{(by the base case)}
        \\&=M-(S_1,\ldots,S_t,K)
        \\&=M-K,
    \end{align*}
    where the final equality in the above follows from the induction hypothesis.
\end{proof}

\section{Differences between \Cref{sec:path-abstr}, \Cref{sec:DTMC-model-checking} and the Methods in \cite{AJWKB10}}
\label{app:differences-with-AJWKB10}

\paragraph{Differences with algorithm specifications in \cite{AJWKB10}.} It deserves to be noted that our reformulation of the methods in \cite{AJWKB10} differ at a few points.
First of all, we note that \cite[Algorithm 2]{AJWKB10} does not require the set $S_1$ over which the inputted \DTMC is recursively abstracted to have the property that $(S_1)_0^M\subsetneq S_1$, of which the analogue in \cite{AJWKB10} is $Inp(M,S_1)\neq\varnothing$ (as $Inp(M,S_1)=S_1\setminus(S_1)_0^M$).
This is not a mistake in itself as in \cite[Algorithm 1]{AJWKB10}, \cite[Algorithm 2]{AJWKB10} is called only for subsets $S_1\subseteq S$ for which $Inp(M,S_1)$ is nonempty, yet this omitted pre-condition in \cite[Algorithm 2]{AJWKB10} might pose problems if not properly addressed and used outside the scope of \cite[Algorithm 1]{AJWKB10}. Indeed: If $S_1\subseteq S$ is strongly connected and non-trivial but also has the property that $(S_1)_0^M=S_1$, then we have $\Comp{M,(S_1)_0^M}\setminus\set{\set{s}\mid s\in S\land\P(ss)=0}=\set{(S_1)_0^M}=\set{S_1}$ so the call of $M\ominus S_1$ will yield the infinite recursion $M\ominus S_1=(M\ominus S_1)-S_1=((M\ominus S_1)-S_1)-S_1=\ldots$ which does not terminate.
\paragraph{Erroneous case of selection of set of states for path abstraction.} On line (1) of \cite[Algorithm 1]{AJWKB10}, $\Comp{M,L}$ is computed where $L=\set{s\in S\mid\P(ss)>0}$, which is erroneous in the following sense. Whether or not we omit singletons of states $s$ that satisfy $\P(ss)=0$ is irrelevant, but \emph{including} the absorbing states has the effect that in the path abstraction their self-loops are removed, because they then become their own respective SCCs over which abstractions are performed. This does not have the consequence that reachability probabilities are lost---we still have the property that the only non-loop transitions (of positive probability) in the final abstracted \DTMC are precisely the transitions $a\rightarrow s$ with $s\in S\setminus K$ absorbing, carrying precisely the respective probabilities $\P_K^a(as)=\P(aK^*s)=\P^a(\diamondsuit\,s)$---but the self-loops of the absorbing states then are removed as well, which, inferring from \cite[Figure 5]{AJWKB10}, is not the intention of the authors of \cite{AJWKB10}.
So instead, we choose $K=\set{s\in S\mid\P(ss)<1}$ and abstract over (the components in) $\Comp{M,K}$, before abstracting over $K$ fully.

\section{Proofs of Results of \Cref{sec:num-methods}}
\label{app:Proofs5}

\begin{proof}[of \Cref{thm:num-methods}]
    Let $S$ be a finite set, and let $\P\in\PP(S)$.
    \begin{enumerate}
        \item Since $\P_2(s,t)=\P(st)$ for all $s,t\in S$ (by construction of $\P_2$ from $\P$), the fact that $\P_2$ is a substochastic matrix is immediate from the second defining property of $\P$ in \Cref{def:PMF}.
        \item Let $S_1\subseteq S$.
        We prove the result by induction on $i\in\NN^+$.
        For $i=1$ we have, for $s,t\in S_1$,
        $\P(sS_1^{i-1}t)=\P(st)=\P_2(s,t)=\P_2(S_1)^1(s,t)$.
        Now, suppose that for a given $i\in\NN^+$ we have
        $\P(sS_1^{i-1}t)=\P_2(S_1)^i(s,t)$ for all $s,t\in S_1$.
        Let $s,t\in S_1$.
        Note that $(sS_1^it)^\leqslant=sS_1^it$, as all sequences in $sS_1^it$ have equal length, and note that $sS_1^it=\biguplus_{r\in S_1}sS_1^{i-1}rt$.
        It follows that
        \begin{align*}
            \P(sS_1^it)
            &=\sum_{r\in S_1}\sum_{x\in sS_1^{i-1}r}\P(xt)
            \\&=\sum_{r\in S_1}\sum_{x\in sS_1^{i-1}r}\P(x\star rt)
            \\&=\sum_{r\in S_1}\sum_{x\in sS_1^{i-1}r}\P(x)\P(rt)
            \\&=\sum_{r\in S_1}\P(sS_1^{i-1}r)\P(rt)
            \\&=\sum_{r\in S_1}\P_2(S_1)^i(s,r)\P_2(S_1)(r,t)
            \quad\text{(by the induction hypothesis)}
            \\&=\P_2(S_1)^{i+1}(s,t).
        \end{align*}
    \end{enumerate}
\end{proof}

\begin{proof}[of \Cref{thm:num-methods-path-abstr}]
Let $S$ be a finite set, let $S_1\subsetneq S$, let $a\in S$ and let $M=(\P,a)\in\M(S,a)$.
Set $\I$, $\O$, $\U$ and $\U_1$ according to how it is done in \Cref{thm:num-methods-path-abstr}.
Note that the system of equations $(\1-\P_2(\U))\cdot Q=\1(\U,\U_1)$ has a unique solution $Q\in\RR^{\U\times\U_1}$ if and only if $\1-\P_2(\U)$ is invertible.
But clearly $X:=\sum_{i=0}^\infty\P_2(\U)^i$ converges as every $r\in\U$ is part of a path leading out of $\U$ with positive probability,
which implies that the spectral radius of $\P_2(\U)$ is smaller than $1$, so $\P_2(\U)$ does not have $1$ as an eigenvalue, so $\1-\P_2(\U)$ does not have $0$ as an eigenvalue, so $\1-\P_2(\U)$ is invertible and has inverse $X$, so $Q=X(\U,\U_1)$. Note that $\U_1\subseteq\U$.
Furthermore, since $s\in\I=S_1\setminus(S_1)_0^M$, $(S_1)_0^M\subseteq S_1$, and $t\in\O\subseteq S\setminus S_1$, the $2$-state sequence $st$ does not pass through $(S_1)_0^M$ and so $\P_{S_1}^a(st)=\P(st+S_1)$ by \Cref{def:path-abstr}.
Clearly, as $s\in S_1$ but $t\notin S_1$, we have $st+S_1=sS_1^*t$.
Now, for our given $s\in\I\cap\U$ and $t\in\O$, we have
\begin{align*}
    \P(sS_1^*t)&=\P(s\U^*t)
    \\&=\P(st)+\sum_{i=1}^\infty\P(s\U^it)
    \\&=\P(st)+\sum_{i=1}^\infty\P(s\U^{i-1}\U_1t)
    \\&=\P(st)+\sum_{i=1}^\infty\sum_{r\in\U_1}\P(s\U^{i-1}r)\P(rt)
    \\&=\P(st)+\sum_{r\in\U_1}\P(rt)\sum_{i=1}^\infty\P(s\U^{i-1}r)
    \\&=\P(st)+\sum_{r\in\U_1}\P(rt)\sum_{i=1}^\infty\P_2(\U)^i(s,r),
\end{align*}
where the final equality in the above follows from part 2 of \Cref{thm:num-methods} for the set of states $\U$. Now, if $s\in\U_1$ then $\P(st)=\sum_{r\in\U_1}[s=r]\cdot\P(rt)=\sum_{r\in\U_1}\P(rt)\1(\U)(s,r)$.
If $s\notin\U_1$ then $\P(st)=0$ but also $[s=r]=0$ for all $r\in\U_1$ so we again have $\P(st)=\sum_{r\in\U_1}[s=r]\cdot\P(rt)=\sum_{r\in\U_1}\P(rt)\1(\U)(s,r)$.
In either case, since $\P_2(\U)^0=\1(\U)$, we can continue our computation for $\P(sS_1^*t)$ where we left off as follows to deduce that
\begin{align*}
    \P(sS_1^*t)
    &=\sum_{r\in\U_1}\P(rt)\sum_{i=0}^\infty\P_2(\U)^i(s,r)
    \\&=\sum_{r\in\U_1}\P(rt)X(s,r)
    \\&=X(s,\U_1)\P_2(\U_1,t)
    \\&=Q(s,\U_1)\P_2(\U_1,t),
\end{align*}
and so $\P_{S_1}^a(st)=\P(st+S_1)=\P(sS_1^*t)=Q(s,\U_1)\P_2(\U_1,t)$.
\end{proof}

\begin{proof}[of \Cref{lem:num-methods-path-abstr-compute-U}]
    Clearly, each $\U_i$ (with $i>0$) is the set of states in $S_1$ that reach $\O$ in exactly $i$ steps and no less (with only the last step not fully within $S_1$), and each $\V_i$ (with $i>0$) is the set of states in $S_1$ that reach $\O$ in at most $i$ steps (with only the last step not fully within $S_1$).
    Since $S_1$ is finite, there will be an $i\in\NN^+$ for which $\U_i=\varnothing$; at that point, the algorithm terminates, and $\V_i\setminus\O$ will be the set of all states in $S_1$ that reach $\O$ (with only the last step not fully within $S_1$) as there will not be states in $S_1$ that reach $\O$ in more than at least $i$ steps.
\end{proof}

\section{Correctness Analysis of \Cref{listing:ref-impl}}
\label{app:ref-impl-analysis}
Let us give a correctness argument of \Cref{listing:ref-impl}, based upon the recipe formulated at the end of \Cref{sec:num-methods}.
\paragraph{Line 1: }Here, {\color{black}\bfseries\texttt{pathAbstr}} takes a (substochastic) \DTMC $M$ as input, given as a pair $M=(X,a)$ where $X\in\RR^{n\times n}$ is a substochastic matrix and $a\in[n]$ is the initial state. The state space of $M$ is $[n]=\set{1,\ldots,n}$. The output is a function that takes a subset $K$ of $[n]$, represented by a bitstring of length $n$---more generally, any subset of $[n]$ is represented by a bitstring of length $n$---and outputs the abstracted \DTMC $(Y, a)$ where $Y$ is the substochastic matrix corresponding to the abstracted \DTMC.
\paragraph{Lines 2--4: }Here, $X$ and $a$ are extracted from $M$, and the dimension $n$ of $X$ is extracted from $X$.
\paragraph{Lines 5--8: }Here, the sets $\I$ and $\O$ as defined in \Cref{thm:num-methods-path-abstr} for $S_1=K$ are computed. Since \texttt{I} and \texttt{O} happen to be reserved keywords in \lang{PARI/GP} (for the imaginary unit and big-O notation, respectively), we use the literals \texttt{II} and \texttt{OO} instead.
\paragraph{Lines 9--20: }Here, we exactly follow the procedure in \Cref{lem:num-methods-path-abstr-compute-U} to compute $\U_1$ and $\U$.
Note that on line \texttt{12}, we can use {\color{black}\bfseries\texttt{vector}}\texttt{(n)} to make an instance at runtime of a vector of length $n$ containing only zeros, as specifying no values at all in \lang{PARI/GP} is equivalent to setting all entries to zero.
\paragraph{Line 21: }Here, we compute \texttt{XU} to serve as a representation of the matrix $X(\U)$, and we obtain \texttt{XU} from \texttt{X} by simply replacing all entries that are in a non-$\U$ column or a non-$\U$ row with zero.
\paragraph{Line 22--23: }At this point, by \Cref{thm:num-methods-path-abstr}, we have the guarantee that $\1-X(\U)$ is invertible so we can perform Gaussian elimination to compute $Q$ from the system of equations
    $A\cdot Q=B$ where $A=\1-X(\U)$ and $B=\1(\U,\U_1)$.
    In \lang{PARI/GP}, such system $A\cdot Q=B$ with $A$ invertible is solved by \emph{Gaussian elimination} by the invocation {\color{black}\bfseries\texttt{matsolve}}\texttt{(A, B)} to yield \texttt{Q}, with proper dimension inference for \texttt{Q}.
    Here, we use matrix comprehension to construct an instance of $\1(\U,\U_1)$ at runtime to be passed to the {\color{black}\bfseries\texttt{matsolve}} method.
\paragraph{Lines 24--25: }Here we exactly follow the determination of $Y$ using \eqref{eq:compute-Y}.
\paragraph{Lines 26--27: }We output the representation of the abstracted (substochastic) \DTMC $(Y,a)$ and close off the function body of the {\color{black}\bfseries\texttt{pathAbstr}} method.

}{}

\end{document}